\documentclass[11pt]{article}
\usepackage[margin=1in]{geometry}
\usepackage{amsmath,amssymb,amsthm,mathtools}
\usepackage{graphicx}
\usepackage{booktabs}
\usepackage{algorithm}
\usepackage{algpseudocode}
\usepackage{microtype}
\usepackage{bm}
\usepackage{enumitem}
\usepackage[round]{natbib}
\usepackage{caption}
\usepackage{xcolor}
\usepackage[colorlinks=true,linkcolor=blue!50!black,citecolor=blue!40!black,urlcolor=blue!50!black]{hyperref}

\theoremstyle{plain}
\newtheorem{theorem}{Theorem}
\newtheorem{proposition}{Proposition}

\theoremstyle{definition}
\newtheorem{definition}{Definition}
\newtheorem{remark}{Remark}
\newtheorem{example}{Example}

\DeclareMathOperator{\E}{\mathbb{E}}
\DeclareMathOperator{\Var}{Var}

\newcommand{\R}{\mathbb{R}}

\newcommand{\Prob}{\mathbb{P}}
\newcommand{\sig}{\mathbb{S}}
\newcommand{\logsig}{\mathbb{L}}
\newcommand{\X}{\mathbf{X}}

\newcommand{\cF}{\mathcal{F}}

\newcommand{\shuffle}{\mathbin{\sqcup\mkern-9mu\sqcup}}
\newcommand{\one}{\mathbf{1}}

\title{\textbf{Chess Signatures of Play}}

\author{
Christian Turk\\[2pt]
\small Department of Mathematics, University of Chicago\\
\small \texttt{kishie@uchicago.edu}
\and
Nicholas G.\ Polson\\[2pt]
\small Booth School of Business, University of Chicago\\
\small \texttt{ngp@chicagobooth.edu}
}
\date{\today}

\begin{document}
\maketitle

\begin{abstract}
\noindent
A game of chess is a \emph{stream}: a time-ordered sequence of moves, each carrying an
engine evaluation, a measure of accuracy, a measure of position complexity, and a clock
reading. We model a game as a multivariate path and apply the signature transform of
rough-path theory to obtain a reparametrization-invariant, graded feature set that
records the \emph{order and interaction} of in-game events without a parametric
likelihood. We show that a player's law of play is identifiable from the expected
signature up to tree-like equivalence, construct a signature-kernel two-sample test on
path space, and recast cheating detection as an \emph{anytime-valid} sequential test: a
signature conformance score becomes an e-process whose error is controlled for every
sample size at once by Ville's inequality, with fluctuations calibrated on the
\emph{moderate-deviation} scale. The discriminating information lives in the signature's
L\'evy areas, which measure whether accuracy \emph{rises precisely when positions become
hard}, the fingerprint of engine assistance that aggregate match-rate statistics
discard. In a controlled study the test holds exact type-I control and detection power
rises from negligible for subtle assistance to $0.98$ for blatant assistance, with a
median detection time matching the growth-rate prediction $\log(1/\alpha)/c$. Calibrated
to Magnus Carlsen's documented elite accuracy, the monitor does \emph{not} flag
world-champion-level play; and we exhibit cheating strategies that leave \emph{every}
aggregate statistic, including the best-move-frequency $z$-score of the Regan system,
exactly unchanged yet are caught cleanly by the signature, making precise how an
order-aware, anytime-valid test strengthens the prevailing approach to chess
anti-cheating.
\end{abstract}

\noindent\textbf{Keywords:} path signatures; rough paths; e-values; anytime-valid
inference; moderate deviations; maximum mean discrepancy; chess analytics; anomaly
detection.

\vspace{4pt}
\noindent\textbf{MSC 2020:} 60L10, 62L12, 62G10, 62M99.

\section{Introduction}

Quantitative chess analysis has long been built on \emph{aggregate} features. A game
is summarized by an opening code, an average centipawn loss, a count of moves that
match an engine's first choice, or a rating differential, and these summaries are fed
to logistic regressions, gradient-boosted trees, or convolutional networks. The
practice is effective for many purposes, but it discards the one thing that makes a
game a game: \emph{sequence}. The order in which accuracy, complexity, and clock usage
unfold, and the way these channels interact through time, is precisely the
information a static summary throws away. A player who is accurate on simple positions
and inaccurate on sharp ones is playing very differently from a player whose accuracy
\emph{rises} on sharp positions, yet the two can have identical average accuracy and
identical aggregate match rates.

The mathematics of \emph{streams}, ordered, possibly irregularly sampled, multivariate
data, has a natural home in rough-path theory and its central object, the
\emph{signature} of a path \citep{chen1957,lyons1998,lyonscaruanalevy2007}. The
signature is a graded sequence of iterated integrals that is invariant to
time-reparametrization, linearises the action of concatenation through Chen's
identity, and, by the uniqueness theorem of \citet{hambly2010}, characterizes a path
of bounded variation up to tree-like equivalence. As a feature map it has a striking
operational property: a wide class of nonlinear functionals of a stream can be
approximated by \emph{linear} functionals of the truncated signature, so complex
sequence-learning reduces to feature extraction followed by linear modelling
\citep{levin2013,chevyrev2016primer,lyons2024survey}. The probabilistic counterpart,
developed by \citet{chevyrev2022moments}, is that the \emph{expected} signature plays
the role for a measure on path space that the moment sequence plays for a measure on
$\R^d$: it determines the law under mild conditions, and it induces a maximum mean
discrepancy \citep{gretton2012} and an associated kernel two-sample test on the space
of stochastic processes.

Signatures have recently entered sports analytics. \citet{soccer2025} represent
football possessions as spatio-temporal paths, take signature features, and outperform
a transformer baseline at a fraction of the computational cost, with the gain
attributed to the signature's faithful encoding of order and interaction. To our
knowledge no comparable treatment exists for chess. This is the gap we address.

\paragraph{Contributions.} We develop a path-signature framework for chess and a set
of inferential tools built on it.
\begin{enumerate}[leftmargin=1.4em,itemsep=2pt]
\item \textbf{A path model of a game} (Section~\ref{sec:path}). We map a game to a
multivariate path whose channels are engine evaluation, move quality, position
complexity, and a cumulative engine-match count, and we identify which channels enter
as \emph{levels} and which as \emph{flows}, a modelling choice that determines where
the discriminating signal lands in the signature.
\item \textbf{Identifiability of style} (Section~\ref{sec:identify}). Treating a
player as a law on path space, we show the expected signature identifies that law up
to tree-like equivalence, so authorship attribution reduces to a linear read-out of
signature features. This is a direct consequence of \citet{hambly2010} and
\citet{chevyrev2022moments} once the model is set up correctly.
\item \textbf{Two-sample style testing} (Section~\ref{sec:mmd}) via the signature
kernel and its MMD, with a consistent permutation test for the hypothesis that a
recent block of games is drawn from a player's historical law.
\item \textbf{Anytime-valid anomaly detection} (Section~\ref{sec:evalue}). We convert
a signature conformance score into an e-process, obtain exact type-I control for every
sample size simultaneously through Ville's inequality, and calibrate the
log-process on the moderate-deviation scale, extending the e-value programme of
\citet{polson2026evalues} from finite-dimensional statistics to path-valued data.
\item \textbf{A provable improvement on the Regan system}
(Section~\ref{sec:regan}). We show that the aggregate best-move-frequency $z$-score
underlying FIDE's standard detector is invariant under within-game rearrangement of
plies, so a cheater who concentrates engine help on the hardest positions while matching
an honest profile's aggregates is information-theoretically invisible to it, yet such a
cheater is caught by the signature's L\'evy area. We also show how Regan's per-move skill
model can serve as the null law on path space for our sequential test, removing its
multiple-testing fragility.
\end{enumerate}
A controlled simulation study (Section~\ref{sec:empirical}) demonstrates the method on
synthetic honest and engine-assisted players, with all reported numbers produced by
the accompanying code.

\paragraph{Why this is the right tool for the cheating problem.} Public discussion of
cheating, most visibly the 2022 to 2024 disputes, has centred on aggregate engine-match
rates and one-shot $z$-scores against a strength model \citep{regan2011}. Two weaknesses
recur. A single aggregate is insensitive to \emph{when} accuracy appears: it cannot
distinguish a strong player from an assisted one whose accuracy is concentrated on the
positions a human would misplay. And repeatedly testing a player as a tournament
progresses, at fixed $\alpha$ each update, inflates the false-flag rate. The signature
addresses the first through its L\'evy areas; the e-process addresses the second by
construction, since it may be monitored continuously and stopped at any data-dependent
time with the type-I guarantee intact.

\section{The signature transform and the law of a path}
\label{sec:sig}

We collect the facts we need. Throughout, a \emph{path} is a continuous map
$\X:[0,T]\to\R^d$ of bounded variation; in practice $\X$ is the piecewise-linear
interpolation of a finite stream $(\mathbf{x}_0,\dots,\mathbf{x}_n)$ in $\R^d$.

\begin{definition}[Signature]
The \emph{signature} of $\X$ is the sequence
\[
\sig(\X) \;=\; \bigl(1,\, \sig^{(1)}(\X),\, \sig^{(2)}(\X),\, \dots\bigr),
\qquad
\sig^{(k)}(\X) \in (\R^d)^{\otimes k},
\]
whose coordinate indexed by the word $i_1 i_2\cdots i_k\in\{1,\dots,d\}^k$ is the
iterated integral
\begin{equation}
\sig^{i_1\cdots i_k}(\X)
=\!\!\int\limits_{0<t_1<\cdots<t_k<T}\!\!
\mathrm{d}X^{i_1}_{t_1}\,\mathrm{d}X^{i_2}_{t_2}\cdots \mathrm{d}X^{i_k}_{t_k}.
\label{eq:sig}
\end{equation}
The \emph{depth-$m$ truncation} $\sig_{\le m}(\X)$ keeps levels $0$ through $m$ and has
$\sum_{k=0}^m d^k$ coordinates. The \emph{log-signature} $\logsig(\X)=\log \sig(\X)$,
the formal logarithm in the tensor algebra, removes the algebraic redundancy of
\eqref{eq:sig} and lives in the free Lie algebra.
\end{definition}

Three structural facts make the signature a natural feature set for streams.

\begin{proposition}[Reparametrization invariance]
If $\psi:[0,T]\to[0,T]$ is a continuous, non-decreasing surjection then
$\sig(\X\circ\psi)=\sig(\X)$. The signature depends on the \emph{image and order} of
the path, not its speed.
\end{proposition}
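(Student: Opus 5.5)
The plan is to prove the identity coordinatewise, by induction on the level $k$, using the change-of-variables formula for Riemann--Stieltjes integrals against a continuous bounded-variation integrator. Write $\mathbf{Y}=\X\circ\psi$. Since $\psi$ is continuous, non-decreasing and onto $[0,T]$, it satisfies $\psi(0)=0$ and $\psi(T)=T$. Moreover $\mathbf{Y}$ is continuous and of bounded variation: any partition of $[0,T]$ is carried by $\psi$ to a (possibly degenerate) partition, so $\mathrm{Var}(\mathbf{Y})\le\mathrm{Var}(\X)$. The iterated integrals \eqref{eq:sig} are therefore well defined for $\mathbf{Y}$.

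To set up the induction, define the running signature coordinates
\[
S^{i_1\cdots i_k}_{0,t}(\X)=\int_0^t S^{i_1\cdots i_{k-1}}_{0,s}(\X)\,\mathrm{d}X^{i_k}_s,
\qquad S^{\varnothing}_{0,t}\equiv 1 .
\]
This recursive form agrees with \eqref{eq:sig} at $t=T$ by Fubini on the simplex. The claim I would prove is the stronger, time-indexed statement
\[
S^{w}_{0,u}(\mathbf{Y})=S^{w}_{0,\psi(u)}(\X)\quad\text{for all } u\in[0,T]
\]
and all words $w$. The proposition is then the case $u=T$.

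\textbf{Base case.} At level $0$ both sides equal $1$. At level $1$ we have $S^{i}_{0,u}(\mathbf{Y})=X^i_{\psi(u)}-X^i_{\psi(0)}=S^i_{0,\psi(u)}(\X)$, because $\psi(0)=0$.

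\textbf{Inductive step.} Suppose the identity holds for the word $w$, and write $F(s)=S^{w}_{0,s}(\X)$, a continuous function. Then
\[
S^{wi}_{0,u}(\mathbf{Y})=\int_0^u F(\psi(r))\,\mathrm{d}(X^i\circ\psi)_r ,
\]
and it remains to show this equals $\int_0^{\psi(u)}F(s)\,\mathrm{d}X^i_s$.

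This substitution lemma is the only real obstacle, because $\psi$ need not be strictly increasing. On intervals where $\psi$ is constant, $\mathbf{Y}$ is constant, so those intervals should contribute nothing. I would argue directly with Riemann--Stieltjes sums. Take a partition $0=r_0<\dots<r_N=u$ with tags $\rho_j$. Its Stieltjes sum
\[
\sum_j F(\psi(\rho_j))\bigl(X^i_{\psi(r_{j+1})}-X^i_{\psi(r_j)}\bigr)
\]
is exactly a Stieltjes sum for $\int_0^{\psi(u)}F\,\mathrm{d}X^i$, taken over the partition $\{\psi(r_j)\}$ with tags $\psi(\rho_j)\in[\psi(r_j),\psi(r_{j+1})]$. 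Repeated points of this partition contribute zero terms. Uniform continuity of $\psi$ gives $\max_j(\psi(r_{j+1})-\psi(r_j))\to0$ as the mesh of the $r$-partition tends to $0$. Since $F$ is continuous and $X^i$ has bounded variation, both sums converge, and they converge to the same limit. This completes the induction.

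Finally I would remark on the consequence noted in the statement: the signature depends only on the image and order of the path. In particular, the piecewise-linear interpolation of a stream has a signature independent of the time-stamps, provided the time-stamps are preserved in order.
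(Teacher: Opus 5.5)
The paper does not prove this proposition. It lists it in Section~\ref{sec:sig} among the structural facts taken as standard from the rough-path literature, so there is no in-paper argument to compare with. Your proof is correct and complete on its own terms.

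The time-indexed strengthening $S^{w}_{0,u}(\X\circ\psi)=S^{w}_{0,\psi(u)}(\X)$ is the right inductive statement, and your substitution step is sound. The Riemann--Stieltjes sums for $\int_0^u F(\psi(r))\,\mathrm{d}(X^i\circ\psi)_r$ are literally Riemann--Stieltjes sums for $\int_0^{\psi(u)}F\,\mathrm{d}X^i$ on the image partition. Flat stretches of $\psi$ contribute zero terms, and uniform continuity of $\psi$ drives the image mesh to zero. So you never need to approximate $\psi$ by strictly increasing homeomorphisms.

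The one thing you assert without comment is that $F=S^{w}_{0,\cdot}(\X)$ is continuous. Carry this in the induction: it follows from $\bigl|S^{wi}_{0,t'}(\X)-S^{wi}_{0,t}(\X)\bigr|\le\|F\|_\infty\,V_{[t,t']}(X^i)$, where $V_{[t,t']}(X^i)$ is the total variation of $X^i$ on $[t,t']$, which is continuous in the endpoints for a continuous bounded-variation path.

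For comparison, the usual textbook route treats $t\mapsto S_{0,t}(\X)$ as the unique solution of $\mathrm{d}S_t=S_t\otimes\mathrm{d}\X_t$, $S_0=\mathbf{1}$. One then checks that $t\mapsto S_{0,\psi(t)}(\X)$ solves the equation driven by $\X\circ\psi$. That is your substitution lemma plus a uniqueness theorem, so your level-by-level induction is the more elementary version of the same idea.

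For the piecewise-linear paths the paper actually uses, there is a shorter route through Chen's identity. Write $\Delta_j=\mathbf{x}_{j+1}-\mathbf{x}_j$ for the segment increments. Cut $[0,T]$ at one preimage under $\psi$ of each breakpoint of $\X$. This splits $\X\circ\psi$ into pieces, each confined to a single segment, i.e.\ of the form $a+g(t)\Delta_j$ with $g$ running from $0$ to $1$. Such a piece has signature $\exp(\Delta_j)$ however $g$ moves, so $\sig(\X\circ\psi)=\prod_j\exp(\Delta_j)=\sig(\X)$. This is quicker, but it leans on Chen's identity (also unproved in the paper) and does not cover general bounded-variation paths, which your argument does.
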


\begin{proposition}[Chen's identity]
For paths $\X$ on $[0,s]$ and $\mathbf{Y}$ on $[s,T]$ with $\X_s=\mathbf{Y}_s$, the
signature of the concatenation factorises as the tensor product
$\sig(\X * \mathbf{Y}) = \sig(\X)\otimes \sig(\mathbf{Y})$.
\end{proposition}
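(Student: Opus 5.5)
We prove Chen's identity level by level, by splitting the simplex of ordered times according to where the split point $s$ falls. Write $\mathbf{Z}=\X*\mathbf{Y}$ for the concatenated path on $[0,T]$. By the matching condition $\X_s=\mathbf{Y}_s$, $\mathbf{Z}$ agrees with $\X$ on $[0,s]$ and with $\mathbf{Y}$ on $[s,T]$. Since $\mathbf{Z}$ is continuous of bounded variation, its increment measures satisfy $\mathrm{d}Z^{i}=\mathrm{d}X^{i}$ on $[0,s]$ and $\mathrm{d}Z^{i}=\mathrm{d}Y^{i}$ on $[s,T]$. Continuity means these measures have no atoms, so the point $s$ carries no mass.

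Fix a word $i_1\cdots i_k$. Up to the null set where some $t_j=s$, the simplex $\{0<t_1<\cdots<t_k<T\}$ is the disjoint union over $j=0,\dots,k$ of the pieces
\[
\Delta_j=\{0<t_1<\cdots<t_j<s<t_{j+1}<\cdots<t_k<T\}.
\]
Each $\Delta_j$ is a product of a $j$-simplex in $[0,s]$ and a $(k-j)$-simplex in $[s,T]$. The iterated integral in \eqref{eq:sig} is an integral against the product of the $k$ signed measures $\mathrm{d}Z^{i_1}\otimes\cdots\otimes\mathrm{d}Z^{i_k}$. On $\Delta_j$ the first $j$ factors coincide with those of $\X$ and the last $k-j$ with those of $\mathbf{Y}$. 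Fubini's theorem for finite signed product measures then factorises the integral over $\Delta_j$ as $\sig^{i_1\cdots i_j}(\X)\,\sig^{i_{j+1}\cdots i_k}(\mathbf{Y})$. Here the empty word is given the value $1$, and the signature of $\mathbf{Y}$ is computed on $[s,T]$.

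Summing over $j$ gives
\[
\sig^{i_1\cdots i_k}(\X*\mathbf{Y})=\sum_{j=0}^{k}\sig^{i_1\cdots i_j}(\X)\,\sig^{i_{j+1}\cdots i_k}(\mathbf{Y}).
\]
This is exactly the coordinate of the word $i_1\cdots i_k$ in the tensor-algebra product $\sig(\X)\otimes\sig(\mathbf{Y})$. The degree-$k$ component of that product is $\sum_{j}\sig^{(j)}(\X)\otimes\sig^{(k-j)}(\mathbf{Y})$, and the coordinate of the tensor $e_{i_1}\otimes\cdots\otimes e_{i_k}$ splits as a prefix of length $j$ times a suffix of length $k-j$. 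Since the identity holds for every word at every level, the two elements of the tensor algebra agree.

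There is no real obstacle in this argument. The only care needed is measure-theoretic: justify Fubini for products of finite signed Lebesgue–Stieltjes measures, which holds via their total variation, and check that boundary faces such as $t_j=s$ are negligible, which follows from continuity. In the piecewise-linear setting of the paper the argument can be made fully elementary. Each $\mathrm{d}X^{i}$ is then a constant multiple of $\mathrm{d}t$ on each segment, so everything reduces to Lebesgue integrals of polynomials over polytopes.
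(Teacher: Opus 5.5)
The paper does not prove this proposition. It quotes Chen's identity as a standard fact from the rough-path literature, in its list of ``facts we need,'' so there is no internal argument to compare against. Your proof is correct and complete.

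Your argument runs as follows:
\begin{enumerate}
\item You split the open simplex at $s$ into the pieces $\Delta_j$.
\item You discard the faces $\{t_j=s\}$, which are null because a continuous bounded-variation path has an atomless Lebesgue--Stieltjes measure.
\item You apply Fubini to the finite signed product measure.
\item This gives the deconcatenation formula $\sig^{i_1\cdots i_k}(\X*\mathbf{Y})=\sum_{j=0}^{k}\sig^{i_1\cdots i_j}(\X)\,\sig^{i_{j+1}\cdots i_k}(\mathbf{Y})$, which is exactly the word coordinate of the tensor-algebra product.
\end{enumerate}
The same atomlessness also justifies reading the iterated integral in \eqref{eq:sig} as a product-measure integral over the open simplex, since the diagonal faces $\{t_a=t_b\}$ are null as well.

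The usual textbook alternative is induction on the level, using $\sig^{(k)}_{0,t}=\int_0^t\sig^{(k-1)}_{0,u}\otimes\mathrm{d}\X_u$. Equivalently, it is uniqueness for the linear equation $\mathrm{d}S_t=S_t\otimes\mathrm{d}\X_t$: on $[s,T]$, the map $t\mapsto\sig(\X)\otimes\sig(\mathbf{Y}|_{[s,t]})$ solves it with the same initial value at $s$ as the signature of the concatenation.

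Each route has its own advantage:
\begin{itemize}
\item The ODE route avoids product-measure bookkeeping. It also transfers verbatim to settings where the signature is defined through a differential equation rather than through iterated Lebesgue--Stieltjes integrals.
\item Your route produces the explicit level-by-level formula in one step. That formula is precisely what the paper's remark about computing truncated signatures in a single forward pass implicitly uses.
\end{itemize}
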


Chen's identity is what makes signatures computable in a single left-to-right pass over
a stream and what makes them natural for game data, where a game is literally a
concatenation of moves. The first level recovers the increment,
$\sig^{(1)}(\X)=\X_T-\X_0$. The second level decomposes into a symmetric and an
antisymmetric part,
\begin{equation}
\sig^{ij}(\X)+\sig^{ji}(\X)=(X^i_T-X^i_0)(X^j_T-X^j_0),
\qquad
A^{ij}(\X):=\tfrac12\bigl(\sig^{ij}(\X)-\sig^{ji}(\X)\bigr),
\label{eq:levy}
\end{equation}
where $A^{ij}$ is the \emph{L\'evy area} swept by channels $i$ and $j$. The symmetric
part is a deterministic function of the endpoints and carries no information beyond the
increment; the L\'evy area is the first genuinely dynamic feature, and it is signed:
it records the \emph{orientation} of the loop traced by $(X^i,X^j)$, hence which of
the two channels tends to \emph{lead} the other. This single observation drives our
entire treatment of move quality, and we return to it in Section~\ref{sec:path}.

A further algebraic fact is essential for the statistical theory: the signature
coordinates are not free, but multiply according to the \emph{shuffle product}. For
words $u,v$ and the coordinate functionals $\sig^{u},\sig^{v}$,
\begin{equation}
\sig^{u}(\X)\,\sig^{v}(\X) \;=\; \sum_{w\,\in\, u\,\shuffle\, v} \sig^{w}(\X),
\label{eq:shuffle}
\end{equation}
where $u\shuffle v$ is the multiset of interleavings of $u$ and $v$ preserving the
internal order of each. (We write $\shuffle$ informally; \eqref{eq:shuffle} is the
statement that products of linear signature functionals are again linear signature
functionals.) Equation~\eqref{eq:shuffle} means the linear span of signature
coordinates is an \emph{algebra} of functions on path space; this is exactly the
closure-under-multiplication needed for a Stone-Weierstrass argument, and it underlies
both the universality of signature features in Proposition~\ref{prop:style} and the
characteristic property of the signature kernel in Theorem~\ref{thm:moments}.

\begin{example}[A signature computed by hand]
Let $d=2$ and let $\X$ be the straight-line path from $(0,0)$ to $(a,b)$. Then
$X^1_t=at,\,X^2_t=bt$ for $t\in[0,1]$, so $\mathrm dX^1=a\,\mathrm dt$,
$\mathrm dX^2=b\,\mathrm dt$, and the level-one and level-two coordinates are
\[
\sig^{1}=a,\quad \sig^{2}=b,\quad
\sig^{11}=\tfrac{a^2}{2},\quad \sig^{22}=\tfrac{b^2}{2},\quad
\sig^{12}=\sig^{21}=\tfrac{ab}{2}.
\]
The L\'evy area $A^{12}=\tfrac12(\sig^{12}-\sig^{21})=0$: a straight line sweeps no
area, as it must. Now let $\X$ instead go from $(0,0)$ to $(a,0)$ and then to
$(a,b)$ (a right angle). A direct integration gives $\sig^{12}=ab$ and $\sig^{21}=0$,
so $A^{12}=\tfrac{ab}{2}>0$: the corner traces a positively oriented area. The two
paths share endpoints, and hence share all level-one features, but differ at level
two precisely through the L\'evy area, the signed record of \emph{how} the endpoint was
reached. This is the elementary mechanism that, scaled up to the complexity-quality
plane, separates honest from assisted play.
\end{example}

\begin{remark}[Computational cost]
The depth-$m$ truncated signature of a $d$-channel stream of length $n$ has
$O(d^m)$ coordinates and is computed in $O(n\,d^m)$ time by a single forward pass using
Chen's identity. For our chess construction $d=4$ and we use $m=2$, so the feature
vector is small ($1+4+16=21$ signature coordinates, of which six are the L\'evy areas we
retain) and the per-game cost is negligible; the signature kernel of
Section~\ref{sec:mmd}, when used untruncated, is instead computed in $O(n^2)$ per pair
of games by solving a Goursat PDE \citep{salvi2021}.
\end{remark}

\begin{theorem}[Uniqueness; \citealp{hambly2010}]
\label{thm:hl}
Two paths of bounded variation have the same signature if and only if they are equal
up to tree-like equivalence and reparametrization. In particular, on the space of
paths that are irreducible (contain no tree-like excursions) and parametrised at unit
speed, the signature map is injective.
\end{theorem}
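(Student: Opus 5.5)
The statement has two directions of unequal difficulty, and I would treat them separately. Throughout, I reduce to the paper's practical setting: piecewise-linear paths, and more generally bounded-variation paths.

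\emph{The ``if'' direction.} Reparametrization invariance is Proposition~1, so only tree-like equivalence needs work. Define $\X\sim\mathbf{Y}$ when the concatenation $\X*\overleftarrow{\mathbf{Y}}$ is tree-like, where $\overleftarrow{\mathbf{Y}}$ is the time reversal. By Chen's identity,
\[
\sig(\X*\overleftarrow{\mathbf{Y}})=\sig(\X)\otimes\sig(\mathbf{Y})^{-1},
\]
because the signature of the reversed path is the tensor inverse of the original signature. This can be checked on a single linear segment, where it reads $\exp(v)\otimes\exp(-v)=1$, and then extended to general paths by Chen's identity.

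It then suffices to show that a tree-like path $\mathbf{Z}$ has trivial signature. Tree-like means there is a height function $h\ge0$ with $h(0)=h(T)=0$ and $|\mathbf{Z}_t-\mathbf{Z}_s|\le h(s)+h(t)-2\inf_{[s,t]}h$. For a piecewise-linear path I would argue by induction on the number of segments. A tree-like piecewise-linear path must contain a segment immediately followed by its own reversal, possibly after merging collinear pieces, for instance at a point where $h$ attains its maximum. Chen's identity cancels that pair, leaving a shorter tree-like path. For general bounded-variation paths I would pass to the limit using continuity of $\sig_{\le m}$ in the $1$-variation topology.

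\emph{The ``only if'' direction.} This is the substantive content of \citet{hambly2010}, and it is the main obstacle. Suppose $\sig(\X)=\sig(\mathbf{Y})$. Then $\mathbf{Z}=\X*\overleftarrow{\mathbf{Y}}$ has trivial signature, and one must show $\mathbf{Z}$ is tree-like.

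My first attempt would follow Hambly--Lyons. Develop $\mathbf{Z}$ into a matrix Lie group (or hyperbolic space) through a linear map $M:\R^d\to\mathfrak{g}$. The solution of $\mathrm dU=U\,M(\mathrm d\mathbf{Z})$ equals the image of $\sig(\mathbf{Z})$ under the induced algebra morphism, so it is the identity whenever the signature is trivial. Next, scale $M$ by a large $\lambda$ and develop into hyperbolic space of curvature $-\lambda^2$. There the developed path is a quasi-geodesic, except where $\mathbf{Z}$ backtracks. Returning to the base point for every $\lambda$ forces the length of the reduced (tree-reduced) path to vanish. This step requires the delicate estimate comparing hyperbolic displacement with the length of the reduced path, and I would import it rather than reprove it.

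\emph{The ``in particular'' claim.} Two irreducible paths that are tree-like equivalent have equal tree reductions, which are unique up to reparametrization. Fixing unit speed then removes the remaining reparametrization freedom, so the signature map is injective on such paths. Because the theorem is explicitly attributed to \citet{hambly2010}, in the paper I would present only the ``if'' argument and this corollary in full, and cite the ``only if'' direction.
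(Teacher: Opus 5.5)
The paper gives no proof of this theorem; it is imported from Hambly--Lyons, so citing the ``only if'' direction, as you plan, matches what the paper does. The problems are in the parts you propose to write out yourself.

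In the piecewise-linear induction, Chen's identity does cancel the backtrack, but ``leaving a shorter tree-like path'' is asserted, not shown, and it is not automatic. The two splice times have the same image in $\R^d$, but not necessarily the same image in the tree. They can therefore sit at different heights, and the restriction of $h$ is then no longer a height function. For instance, take the tree with unit edges $rx$, $x\ell$, $xy$, mapped to $\R^2$ by $r,x,\ell,y\mapsto 0,u,2u,0$ for a unit vector $u$, linearly on edges. The loop $r\,x\,\ell\,x\,y\,x\,r$ gives $0\to 2u\to 0\to u\to 0$. Cancelling the merged pair $0\to 2u\to 0$ at the maximum leaves a path on which the restricted $h$ starts at $2$.

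Appealing to transitivity of tree-like equivalence would close the step, but that is itself one of the nontrivial Hambly--Lyons results. A self-contained fix is to cancel in the tree rather than in $\R^d$. Write $\mathbf{Z}=\psi\circ\varphi$, where $\varphi$ is the quotient map onto the $\R$-tree of the pseudometric $h(s)+h(t)-2\inf_{[s,t]}h$ (a loop at the root) and $\psi$ is a $1$-Lipschitz map. For piecewise-linear $\mathbf{Z}$ without constant pieces, $\varphi$ is injective on each linear piece. Hence leaves arise only at breakpoints, and the tree is finite. Subdivide so that $\psi$ traces a straight segment with increment $v_e$ along each edge $e$. Then $\sig(\mathbf{Z})$ is the product of the factors $\exp(\pm v_e)$ read along a closed walk in a finite tree. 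Such a walk reduces to the empty word by cancelling pairs $e\bar e$, each contributing $\exp(v_e)\otimes\exp(-v_e)=1$.

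The passage to bounded-variation paths has the same kind of hole. Continuity of $\sig_{\le m}$ only helps if the approximants already have trivial signature, and ordinary piecewise-linear interpolants of a tree-like path do not. An out-and-back along a circular arc, sampled at different points on the two legs, interpolates to a polygon with nonzero L\'evy area. Instead, for a partition $(t_i)$, take the walk along tree geodesics through $\varphi(t_0),\dots,\varphi(t_k)$ in the finite subtree they span, and compose it with the map that interpolates $\psi$ linearly on that subtree's edges. These paths have trivial signature by the argument above, have length at most that of $\mathbf{Z}$, and converge uniformly to $\mathbf{Z}$. That suffices for convergence of the truncated signatures.

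Finally, the signature sees only increments: $\sig(\X+c)=\sig(\X)$. So the concatenation in your definition of $\sim$ must carry a translation, since Chen's identity as stated needs matching endpoints. Likewise, the injectivity in the ``in particular'' clause holds only after fixing the starting point, say $\X_0=0$. The paper's statement omits this caveat too, but your final step asserts injectivity without it.
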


Theorem~\ref{thm:hl} says the signature loses \emph{nothing} except the information a
reasonable feature map should lose: the speed at which a fixed shape is traversed and
backtracking that cancels itself. For modelling games, where we will fix a
parametrisation by ply and the paths of interest contain no exact cancellations, the
map is effectively injective.

\subsection{The expected signature and the law of a path}

Let $\X$ be a random path with law $\mu$ on path space. Its \emph{expected signature}
is $\Phi(\mu):=\E_{\X\sim\mu}[\sig(\X)]$, taken coordinatewise. The next result is the
path-space analogue of the classical moment problem.

\begin{theorem}[Expected signature determines the law; \citealp{chevyrev2022moments}]
\label{thm:moments}
Under a tightness/normalisation condition on $\mu$ (satisfied, in particular, when the
signature is suitably normalised so that all moments are finite), the expected
signature $\Phi(\mu)$ determines $\mu$ uniquely. Moreover the map
\[
d_{\sig}(\mu,\nu)\;=\;\bigl\| \Phi(\mu)-\Phi(\nu)\bigr\|
\]
is a metric on laws of stochastic processes, and it admits a kernelised form, the
\emph{signature MMD}, computable without explicit truncation through the signature
kernel.
\end{theorem}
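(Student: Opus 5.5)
The plan is the classical moment-problem argument, transported to path space through the signature map. I work with the pushforward $\sig_\#\mu$, a law on the group-like elements of the tensor algebra, rather than with $\mu$ itself. By Theorem~\ref{thm:hl}, $\sig$ is injective on tree-reduced paths, so determining $\sig_\#\mu$ determines $\mu$ on tree-like equivalence classes. That is the precise sense in which ``determines $\mu$'' should be read here. The claim therefore reduces to showing that $\Phi(\mu)$ determines $\sig_\#\mu$.

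\textbf{Step 1 (algebra of test functions).} Consider the class $\mathcal{A}$ of linear functionals $\X\mapsto\langle \ell,\sig(\X)\rangle$, with $\ell$ ranging over finite linear combinations of words. By the shuffle identity \eqref{eq:shuffle}, $\mathcal{A}$ is closed under multiplication. It contains the constants through the level-$0$ term $1$, and by Theorem~\ref{thm:hl} it separates points of the space of tree-reduced paths. Because $\E_\mu\langle\ell,\sig(\X)\rangle=\langle \ell,\Phi(\mu)\rangle$, the expected signature fixes the integral of every element of $\mathcal{A}$.

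\textbf{Step 2 (from polynomial moments to the law).} This is the main obstacle. The space of paths is not compact, and the elements of $\mathcal{A}$ are unbounded, so Stone--Weierstrass alone does not let us pass from ``$\int f\,d\mu$ is known for $f\in\mathcal{A}$'' to ``$\mu$ is known.'' This is exactly where the tightness/normalisation hypothesis enters.

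My first attempt follows Chevyrev--Oberhauser. Compose $\sig$ with a tensor normalisation $\Lambda$, which rescales each level $k$ by $\lambda(\X)^k$ so that $\|\Lambda(\sig(\X))\|$ is bounded. Then $\Lambda\circ\sig$ is a bounded, continuous, injective feature map. The rescaled functionals $\langle\ell,\Lambda\circ\sig\rangle$ still form a point-separating algebra under the shuffle product. They are now bounded, and a strict-topology version of Stone--Weierstrass shows they are dense in $C_b$. Hence equality of normalised expected signatures forces equality of laws, via Riesz representation and uniqueness of Radon measures on a Polish space, using tightness. In the unnormalised statement the hypothesis ``all moments finite'' must be strengthened to a Carleman-type growth condition on $\|\Phi^{(k)}(\mu)\|$, for example infinite radius of convergence of $\sum_k \|\Phi^{(k)}\|\,t^k$. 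Under such a condition the one-dimensional laws of each $\langle\ell,\sig(\X)\rangle$ are moment-determinate. Cramér--Wold, applied to finite collections of coordinates, then fixes all finite-dimensional marginals of $\sig_\#\mu$, and hence the law itself.

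\textbf{Step 3 (metric and kernel).} Metric properties of $d_{\sig}$ are immediate: symmetry and the triangle inequality come from the norm, and definiteness is Step 2. For the kernelised form, take $k(\X,\mathbf{Y})=\langle \sig(\X),\sig(\mathbf{Y})\rangle$ (normalised as in Step 2). Its feature map is $\sig$, so the kernel mean embedding of $\mu$ is $\Phi(\mu)$. Expanding the square gives
\[
d_\sig(\mu,\nu)^2=\E k(\X,\X')-2\E k(\X,\mathbf{Y})+\E k(\mathbf{Y},\mathbf{Y}'),
\]
which is the MMD of \citet{gretton2012}. The kernel is evaluated without truncation by the Goursat PDE of \citet{salvi2021}. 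Step 2 shows the kernel is characteristic, so $d_\sig$ is a genuine metric.
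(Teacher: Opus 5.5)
Your normalised argument (the first half of Step~2, and Step~3 with the normalised kernel) is correct. It is the Chevyrev--Oberhauser proof that the paper cites; the paper gives no proof of its own beyond echoing the shuffle/Stone--Weierstrass idea in the sketch of Proposition~\ref{prop:style}.

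The genuine gap is your unnormalised branch. It matters because the paper defines $\Phi(\mu)=\E[\sig(\X)]$ without normalisation. You claim that infinite radius of convergence of $\sum_k\|\Phi^{(k)}(\mu)\|t^k$ makes every one-dimensional law of $\langle\ell,\sig(\X)\rangle$ moment-determinate. That claim is false, so the Cram\'er--Wold step has nothing to stand on.

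Take $d=1$ and the straight line $\X_t=tZ$, $t\in[0,1]$, with $Z\sim\mathcal N(0,1)$. This path is of bounded variation, tree-reduced, and has all moments. Its signature is $\sig(\X)=(1,Z,Z^2/2!,Z^3/3!,\dots)$, so $\Phi^{(2j)}=1/(2^j j!)$ and $\Phi^{(2j+1)}=0$. Hence $\sum_k|\Phi^{(k)}|t^k=e^{t^2/2}$ is entire. Yet $\sig^{111}(\X)=Z^3/6$, and the cube of a Gaussian is moment-indeterminate (Berg, 1988).

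The hypothesis does suffice for level-one functionals: their $2n$-th moment is $(2n)!\,\langle\ell^{\otimes 2n},\Phi^{(2n)}\rangle$, and Carleman's condition holds. A word of length $m$, however, has moments $\langle\ell^{\shuffle n},\Phi^{(nm)}\rangle$ involving $(nm)!/(m!)^n$ shuffles, and that growth can destroy Carleman's condition. In this example the law is still determined, through $Z$ at level one, so the conclusion survives; your route to it does not.

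What works under infinite radius of convergence is the characteristic-function argument of Chevyrev and Lyons (2016). It replaces polynomial moments by bounded test functions, namely the unitary developments $\widetilde M(\sig(\X))\in U(n)$ induced by linear maps $M:\R^d\to\mathfrak u(n)$.
\begin{itemize}
\item Their expectations equal $\sum_k M^{\otimes k}\Phi^{(k)}(\mu)$.
\item Exchanging $\E$ and the sum is legitimate. For group-like elements, $\E\|\sig^{(k)}(\X)\|^2=\sum_{|w|=k}\langle e_w\shuffle e_w,\Phi(\mu)\rangle$, with $e_w$ the basis tensor of the word $w$, and this is controlled by $\Phi^{(2k)}(\mu)$.
\item Their matrix coefficients form a unital, conjugation-closed algebra of bounded functions that separates points of the signature group. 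A uniqueness argument then identifies the law, with no determinacy of individual coordinates required.
\end{itemize}

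Your Cram\'er--Wold route is valid when $\|\X\|_{1\text{-var}}$ is almost surely bounded, since every coordinate is then a bounded random variable. That covers the bounded chess channels, but not the general statement.

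A smaller point in Step~3: the Goursat PDE computes the unnormalised kernel. The normalised kernel is recovered via $\langle\Lambda\sig(\X),\Lambda\sig(\mathbf{Y})\rangle=k_{\sig}(\lambda_{\X}\lambda_{\mathbf{Y}}\X,\mathbf{Y})$, with $\lambda_{\X}$ found by solving a scalar equation in $\lambda\mapsto k_{\sig}(\lambda\X,\lambda\X)$.
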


The signature kernel $k_{\sig}(\X,\mathbf{Y})=\langle \sig(\X),\sig(\mathbf{Y})\rangle$
can be evaluated either by truncating and taking a Euclidean inner product of feature
vectors, or exactly as the solution of a Goursat partial differential equation
\citep{kiraly2019,salvi2021}. Either way, Theorem~\ref{thm:moments} furnishes a
nonparametric two-sample test on path space, which we use in Section~\ref{sec:mmd}.

\section{A path model of a game of chess}
\label{sec:path}

Fix a reference engine $\mathcal{E}$ (in practice a fixed Stockfish depth) and a
player to be analysed. For a game with the player to move on plies $t=1,\dots,n$, we
record per-ply features and assemble them into a path. We use four channels.

\begin{itemize}[leftmargin=1.4em,itemsep=2pt]
\item \textbf{Position complexity} $c_t$: a scalar measuring how sharp or tactically
loaded the position is before the move (e.g.\ the dispersion of $\mathcal{E}$'s
top-line evaluations, or the eval swing under small perturbations). Complexity is a
\emph{state} of the board.
\item \textbf{Move quality} $q_t$: a signed measure of how good the played move was,
e.g.\ the negative centipawn loss relative to $\mathcal{E}$'s best move. Quality is
also a \emph{state} attached to ply $t$.
\item \textbf{Engine match} $m_t\in\{0,1\}$: whether the played move is $\mathcal{E}$'s
top choice. A match is an \emph{event}; its natural path coordinate is the cumulative
count $M_t=\sum_{s\le t} m_s$, a \emph{flow}.
\item \textbf{Ply time} $\tau_t=t/n$: a monotone time augmentation that, by breaking
reparametrization invariance in a controlled way, lets the signature see the rate of
change of the other channels.
\end{itemize}

The path is the piecewise-linear interpolation of
\begin{equation}
\X_t = \bigl(\tau_t,\; c_t,\; q_t,\; M_t \bigr)\in\R^4,
\qquad t=0,1,\dots,n.
\label{eq:gamepath}
\end{equation}

\begin{table}[t]
\centering\small
\caption{Channels of the game path \eqref{eq:gamepath}, their operational definition
from a fixed reference engine $\mathcal{E}$, and their representation as a level
(state) or a flow (cumulated event).}
\label{tab:channels}
\begin{tabular}{llll}
\toprule
Channel & Symbol & Operational definition & Representation \\
\midrule
Ply time & $\tau_t$ & normalised move index $t/n$ & monotone clock \\
Complexity & $c_t$ & dispersion of $\mathcal{E}$'s top-$k$ line evaluations & level \\
Quality & $q_t$ & negative centipawn loss vs.\ $\mathcal{E}$'s best move & level \\
Match count & $M_t$ & $\sum_{s\le t}\one\{\text{move}=\mathcal{E}\text{ top-1}\}$ & flow \\
\bottomrule
\end{tabular}
\end{table}

The
distinction between \emph{states} and \emph{flows} is not cosmetic. A signature
L\'evy area $A^{ij}$ is built from $\int X^i\,\mathrm{d}X^j$; if a channel enters as a
flow (a cumulative sum) then $\mathrm{d}X^j$ is the per-ply feature, whereas if it
enters as a level then $X^j$ itself is the per-ply feature. Choosing complexity and
quality as levels makes $A^{c,q}$ the signed area of the curve
$t\mapsto(c_t,q_t)$, which is exactly the object that detects lead-lag between
difficulty and accuracy. Choosing the wrong representation (cumulating a level, say)
spreads the same signal across higher-order terms and dilutes it; we observed this
directly in development, and it is the single most important modelling lesson of the
construction.

\paragraph{What the L\'evy area sees.} Consider a stylised generative contrast.
An \emph{honest} player of fixed strength misplays \emph{more} as complexity rises, so
quality and complexity move together contemporaneously,
\[
q_t \approx -\alpha\,c_t + \varepsilon_t \qquad (\text{honest}),
\]
and the curve $(c_t,q_t)$ traces no consistently oriented loop: $\E[A^{c,q}]\approx 0$.
An \emph{engine-assisted} player consults $\mathcal{E}$ on hard positions, so a spike in
complexity at ply $t$ is followed by an engine-grade move at ply $t+1$:
\[
q_t \approx +\beta\,c_{t-1} + \varepsilon_t \qquad (\text{assisted}).
\]
Now quality \emph{lags} complexity with positive sign, the curve $(c_t,q_t)$ circulates
with a definite orientation, and $\E[A^{c,q}]>0$. Crucially, the two regimes can share
the same marginal accuracy and the same engine-match rate; they differ only in the
\emph{temporal coupling} of difficulty and accuracy, which is invisible to aggregate
statistics but is precisely a signed level-two signature coordinate. This is the
mechanism our detector exploits.

\section{Identifiability of playing style}
\label{sec:identify}

We now make ``style'' precise. Model a player $P$ as a probability law $\mu_P$ on the
space of game paths \eqref{eq:gamepath}: a game is a draw $\X\sim\mu_P$, and the
player's stylistic identity is the measure $\mu_P$ itself. Authorship attribution,
style change, and anomalous play are then statements about $\mu_P$.

\begin{proposition}[Style is identifiable]
\label{prop:style}
Suppose each player's games are irreducible at unit ply-parametrisation and the
expected signatures are suitably normalised. Then the map $P\mapsto \mu_P\mapsto
\Phi(\mu_P)$ is injective: distinct playing styles have distinct expected signatures.
Consequently there is a (generally infinite) linear functional $\ell$ on the tensor
algebra with $\ell(\Phi(\mu_P))\neq \ell(\Phi(\mu_{P'}))$ whenever $\mu_P\neq\mu_{P'}$,
and a depth-$m$ truncation $\ell_m$ approximates any continuous style discriminant to
arbitrary accuracy as $m\to\infty$.
\end{proposition}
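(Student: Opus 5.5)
The proof has three steps, in order: injectivity, separation by a linear functional, and approximation by truncations.

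\emph{Injectivity.} The hypotheses reduce it to Theorem~\ref{thm:hl} and Theorem~\ref{thm:moments}. Let $\mathcal{P}$ be the set of bounded-variation paths that are irreducible and parametrised by ply at unit speed. On $\mathcal{P}$, Theorem~\ref{thm:hl} makes $\sig$ injective. Two consequences follow.
\begin{itemize}[leftmargin=1.4em,itemsep=2pt]
\item Each law $\mu_P$ supported on $\mathcal{P}$ is the pushforward of a unique law $\sig_\#\mu_P$ on the image $\sig(\mathcal{P})$ in the tensor algebra.
\item Distinct laws on $\mathcal{P}$ therefore have distinct pushforwards.
\end{itemize}
Theorem~\ref{thm:moments}, applied under the assumed normalisation, then says $\Phi(\mu)$ determines $\mu$. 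Composing these, $\mu_P\neq\mu_{P'}$ implies $\Phi(\mu_P)\neq\Phi(\mu_{P'})$. Injectivity of $P\mapsto\mu_P$ is tautological, because a style is by definition the law $\mu_P$.

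\emph{Separation.} Since $\Phi(\mu_P)-\Phi(\mu_{P'})\neq0$ as an element of the (completed) tensor algebra, some coordinate (word) $w$ has $\E_{\mu_P}\sig^w\neq\E_{\mu_{P'}}\sig^w$. The coordinate functional $\ell=\langle e_w,\cdot\rangle$ then separates the two players. For a single functional $\ell$ that works uniformly for all pairs, I would instead take $\ell$ to be the Riesz representer of $\Phi(\mu_P)-\Phi(\mu_{P'})$ in the Hilbert completion. This functional may be genuinely infinite, which is why the statement says ``generally infinite''. Its action is $\ell(\Phi(\mu_P))-\ell(\Phi(\mu_{P'}))=\|\Phi(\mu_P)-\Phi(\mu_{P'})\|^2>0$, which is the metric $d_{\sig}$ of Theorem~\ref{thm:moments}.

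\emph{Approximation.} This step uses a Stone--Weierstrass argument. Take a compact set $K\subset\mathcal{P}$ in the bounded-variation topology, where $\sig_{\le m}$ is continuous. Let $\mathcal{A}$ be the linear span of the coordinate functions $\X\mapsto\sig^w(\X)$. Then:
\begin{itemize}[leftmargin=1.4em,itemsep=2pt]
\item $\mathcal{A}$ is an algebra by the shuffle identity \eqref{eq:shuffle};
\item it contains constants through the empty word;
\item it separates points of $K$ by Theorem~\ref{thm:hl}.
\end{itemize}
Stone--Weierstrass therefore gives uniform density of $\mathcal{A}$ in $C(K)$. Every element of $\mathcal{A}$ involves finitely many words, so a continuous discriminant $f$ is uniformly approximated on $K$ by some $\ell_m\circ\sig_{\le m}$. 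Taking expectations transfers this to $\ell_m(\Phi_{\le m}(\mu_P))$, with error at most $\varepsilon+2\sup_m\mu_P(K^c)\cdot\|\cdots\|$.

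\emph{Main obstacle.} The hardest step is passing from compact $K$ to the full support. Signature functionals are unbounded, so the tail term $\E[|\ell_m(\sig_{\le m})|\mathbf{1}_{K^c}]$ must be controlled. I would first invoke tightness of $\mu_P$ to choose $K$ with small complement. I would then use the normalisation assumed in Theorem~\ref{thm:moments}, which makes the normalised signature bounded, so that the approximants are uniformly bounded and dominated convergence closes the argument. A secondary subtlety is continuity of $\sig$ on $\mathcal{P}$ with the chosen topology. For the piecewise-linear ply paths of \eqref{eq:gamepath} with bounded channels, I would simply work in the finite-dimensional space of stream values, where this continuity is immediate.
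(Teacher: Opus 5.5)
The injectivity argument and the compact-set Stone--Weierstrass step are the paper's own route (Theorem~\ref{thm:hl} with Theorem~\ref{thm:moments}, then the shuffle algebra of coordinates separating points), and they are fine. Two steps where you go beyond the paper's sketch do not work as written.

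\textbf{The Riesz representer does not give a single $\ell$.} The Riesz representer of $\Phi(\mu_P)-\Phi(\mu_{P'})$ depends on the pair $(P,P')$. So it is not ``a single functional that works uniformly for all pairs''; it is only a pair-specific alternative to your coordinate functional $e_w$, which already settles the pairwise reading. If the statement is read literally, with one $\ell$ separating every pair, you need a genericity argument instead. For each pair, the functionals with $\langle\ell,\Phi(\mu_P)-\Phi(\mu_{P'})\rangle=0$ form a closed hyperplane with empty interior in the dual of the normalised, Hilbert-completed tensor algebra. For countably many players, a Baire-category choice of $\ell$ avoids all of these hyperplanes, and such an $\ell$ generically involves infinitely many words. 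Some restriction on the family of players, such as countability, is essential. $\Phi$ is affine in $\mu$, so on a family of laws closed under mixtures and containing three laws with affinely independent expected signatures, no linear functional can be injective.

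\textbf{The tail control is unjustified.} Boundedness of the normalised signature, $\|\tilde{\sig}(\X)\|\le R$, only gives $|\ell_m(\tilde{\sig}_{\le m}(\X))|\le\|\ell_m\|R$. Stone--Weierstrass on $K$ gives no control of $\|\ell_m\|$ as $K$ grows and $\varepsilon$ shrinks, so there is no dominating function and the tail term need not vanish. The missing idea is to use the algebra property a second time. For the standard tensor normalisation, which is a dilation by a path-dependent scalar $\lambda(\X)$, the normalised coordinates are $\lambda(\X)^{|w|}\sig^w(\X)$. These remain closed under products by \eqref{eq:shuffle}, because every word in $u\shuffle v$ has length $|u|+|v|$. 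Now let $a$ be in the algebra with $|a-f|<\varepsilon$ on $K$, where $f$ is your discriminant. Then $|a|\le C_a$ everywhere, since $a$ is a finite combination of bounded coordinates. Take a polynomial $p$ within $\varepsilon$ of the clamp $t\mapsto\max(-M,\min(M,t))$ on $[-C_a,C_a]$, where $M=\sup|f|$. Then $p\circ a$ lies in the algebra at a larger depth and satisfies $|p\circ a|\le M+\varepsilon$ everywhere and $|p\circ a-f|\le 2\varepsilon$ on $K$. The tail term is therefore at most $(2M+\varepsilon)\,\mu_P(K^c)$. The paper's sketch only asserts uniform approximation on compact subsets of path space and never passes to expectations, so your compact-$K$ argument already matches what the paper proves. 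Also, with bounded channels at fixed game length, the finite-dimensional reduction you mention makes the support compact, so this obstacle disappears.
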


\begin{proof}[Proof sketch]
By Theorem~\ref{thm:hl} the signature is injective on the stated path class, so
$\X\mapsto\sig(\X)$ loses no information about an individual game. By
Theorem~\ref{thm:moments} the expected signature determines $\mu_P$, giving injectivity
of $P\mapsto\Phi(\mu_P)$. The linear-functional statement is the universality of
signature features: linear functionals of the signature are dense, in the uniform norm
on compact sets, in the continuous functions on path space (a Stone-Weierstrass
argument using that the signature coordinates form a point-separating algebra closed
under the shuffle product). Truncating at depth $m$ gives a finite-dimensional linear
read-out converging to any continuous discriminant.
\end{proof}

Proposition~\ref{prop:style} converts an informal intuition, that grandmasters have
recognisable styles, into a statement with operational content: a classifier that is
\emph{linear} in signature features is, in the limit, as expressive as any continuous
style discriminant. It also tells us what to estimate. The empirical expected
signature $\widehat{\Phi}_P = \tfrac1N\sum_{g=1}^N \sig_{\le m}(\X^{(g)})$ over a
player's $N$ games is the natural plug-in style descriptor, and differences
$\widehat\Phi_P-\widehat\Phi_{P'}$ are the raw material of the tests that follow.

\begin{proposition}[Concentration of the empirical style descriptor]
\label{prop:concentration}
Fix a depth-$m$ truncation and suppose the signature features are bounded,
$\|\sig_{\le m}(\X)\|\le R$ almost surely (which holds after the standard tensor
normalisation, and approximately for the bounded chess channels we use). Then for a
player's $N$ i.i.d.\ games,
\[
\E\bigl\|\widehat\Phi_P-\Phi(\mu_P)\bigr\|
\;\le\; \frac{R}{\sqrt N},
\qquad\text{and}\qquad
\Prob\Bigl(\bigl\|\widehat\Phi_P-\Phi(\mu_P)\bigr\|\ge \tfrac{R}{\sqrt N}+t\Bigr)
\le e^{-Nt^2/(2R^2)} .
\]
\end{proposition}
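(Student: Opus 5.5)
The plan is to treat the truncated signature $Z_g:=\sig_{\le m}(\X^{(g)})$ as a bounded random vector in the finite-dimensional Hilbert space $H=\bigoplus_{k=0}^m(\R^d)^{\otimes k}$, equipped with the Euclidean (Hilbert--Schmidt) norm. Under the hypothesis $\|Z_g\|\le R$, the $Z_g$ are i.i.d.\ with mean $\Phi(\mu_P)$, truncated at depth $m$. Then $\widehat\Phi_P-\Phi(\mu_P)=\tfrac1N\sum_g(Z_g-\E Z_g)$ is an average of centred i.i.d.\ Hilbert-valued vectors. Both claims are standard facts about such averages, so the structure of the signature is used only through boundedness. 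In particular Theorem~\ref{thm:moments} is not needed.

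\textbf{Step 1: the mean bound.} I will apply Jensen's inequality and then expand the square, using orthogonality of independent centred summands in $H$:
\[
\E\|\widehat\Phi_P-\Phi\|\le\bigl(\E\|\widehat\Phi_P-\Phi\|^2\bigr)^{1/2}
=\Bigl(\tfrac1N\E\|Z_1-\E Z_1\|^2\Bigr)^{1/2}.
\]
Next I use $\E\|Z_1-\E Z_1\|^2=\E\|Z_1\|^2-\|\E Z_1\|^2\le R^2$. This gives $R/\sqrt N$ directly, with no factor of $2$; that is why the variance identity is preferable to the triangle inequality here.

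\textbf{Step 2: the tail bound.} I will apply McDiarmid's bounded-differences inequality to the scalar function $f(Z_1,\dots,Z_N)=\|\tfrac1N\sum_g Z_g-\Phi\|$. By the reverse triangle inequality, replacing one $Z_g$ by $Z_g'$ changes $f$ by at most $\|Z_g-Z_g'\|/N\le 2R/N$. McDiarmid then gives
\[
\Prob(f\ge\E f+t)\le\exp\!\bigl(-2t^2/\textstyle\sum_g(2R/N)^2\bigr)=e^{-Nt^2/(2R^2)}.
\]
This matches the stated exponent exactly. Combining it with $\E f\le R/\sqrt N$ from Step 1 yields the second display, since the event $\{f\ge R/\sqrt N+t\}$ is contained in $\{f\ge \E f+t\}$.

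\textbf{Main obstacle.} The probabilistic argument is routine; the delicate point is the hypothesis. Signature norms grow like $L^k/k!$ in the path length $L$, so raw signatures are not bounded in general. The bound $R$ must come either from a tensor normalisation in the sense of \citet{chevyrev2022moments}, which maps the signature into a bounded set while preserving characteristicness, or from an a priori bound on the chess channels. In the latter case one uses bounded $c_t$ and $q_t$, $\tau\in[0,1]$ and $M_t\le n$, which together bound the path's total variation $L$, and hence gives $R\le\sum_{k\le m}L^k/k!$. I will state the result conditionally on $R$ as written and remark that $R$ is explicit in the second case. If the mean of the normalised signature is meant, the same proof applies verbatim with $Z_g$ replaced by the normalised features.
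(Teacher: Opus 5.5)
Your proof is correct and follows the paper's route. The mean bound comes from the second-moment computation for an average of i.i.d.\ centred bounded Hilbert-valued vectors, which the paper cites as the ``standard rate''; the tail bound comes from McDiarmid's inequality with bounded-difference constant $2R/N$. You also make explicit a step the paper leaves implicit: McDiarmid controls deviations above $\E f$, and the first bound $\E f\le R/\sqrt N$ converts this into the stated threshold $R/\sqrt N+t$.
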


\begin{proof}[Proof sketch]
The first bound is the standard rate for the empirical mean of a bounded
Hilbert-space-valued random variable. The second is McDiarmid's inequality: changing one
of the $N$ games perturbs $\widehat\Phi_P$ by at most $2R/N$ in norm, so the bounded
differences constant is $2R/N$ and the deviation probability is at most
$\exp(-2t^2/(N(2R/N)^2))=\exp(-Nt^2/(2R^2))$.
\end{proof}

Proposition~\ref{prop:concentration} quantifies how much play is needed to pin down a
style: the empirical expected signature converges at the parametric $N^{-1/2}$ rate, so
a few dozen to a few hundred games, the corpus sizes that arise in practice and in our
study, suffice to estimate $\Phi(\mu_P)$ to useful accuracy. The same bound controls
the reference statistics $(\bar\phi,\widehat\Sigma)$ feeding the conformance score of
Section~\ref{sec:evalue}.

\begin{remark}[Why aggregates are a special, lossy case]
Average centipawn loss is the single signature coordinate $\sig^{(1)}$ along the
quality channel (an endpoint increment); the engine-match rate is $M_n/n$, the
terminal value of the match flow. Both are level-one features. Proposition
\ref{prop:style} locates them inside a strictly richer hierarchy whose level-two terms
already contain the lead-lag information they cannot represent.
\end{remark}

\section{Two-sample testing on path space}
\label{sec:mmd}

Suppose we observe a reference block of games $\{\X^{(g)}\}_{g=1}^{N}$ believed to come
from a player's established law $\mu_P$, and a query block
$\{\mathbf{Y}^{(h)}\}_{h=1}^{N'}$, and we ask whether the query block is drawn from
$\mu_P$. By Theorem~\ref{thm:moments} this is a two-sample problem on laws of paths,
and the signature MMD is a consistent statistic for it.

Let $k_{\sig}$ be the signature kernel. The squared MMD between the reference law $\mu$
and query law $\nu$ is
\begin{equation}
\mathrm{MMD}^2(\mu,\nu)
= \E\,k_{\sig}(\X,\X') + \E\,k_{\sig}(\mathbf{Y},\mathbf{Y}')
- 2\,\E\,k_{\sig}(\X,\mathbf{Y}),
\label{eq:mmd}
\end{equation}
with independent copies in each expectation. Because $k_{\sig}$ is characteristic on
the relevant path space \citep{chevyrev2022moments,kiraly2019},
$\mathrm{MMD}(\mu,\nu)=0$ if and only if $\mu=\nu$. The standard unbiased $U$-statistic
estimator of \eqref{eq:mmd}, combined with a permutation calibration of its null
distribution, yields a consistent level-$\alpha$ test. We use this ``batch'' test for
retrospective questions (did a player's style shift between two tournaments?) and we
turn to its \emph{sequential} counterpart, where games arrive one at a time and we must
decide when to act, in the next section.

\begin{remark}[Truncated MMD is a Mahalanobis-type distance]
If we truncate the signature at depth $m$ and write $\phi(\X)=\sig_{\le m}(\X)$, then
$\mathrm{MMD}^2$ reduces to $\|\E\phi(\X)-\E\phi(\mathbf{Y})\|^2$ in feature space.
Whitening by the reference covariance turns this into a Mahalanobis distance between
mean signatures, which is exactly the conformance score we adopt below for the
streaming detector.
\end{remark}

\section{Anytime-valid detection of anomalous move quality}
\label{sec:evalue}

We now build the sequential detector. The data are a player's games arriving one at a
time during a match or a monitoring window; after each game we update a running measure
of evidence against the null hypothesis
\[
H_0:\quad \text{the games are generated by the player's honest law } \mu_P,
\]
and we wish to be free to stop and act at any time, after game $5$, game $40$, or
never, without inflating the false-flag rate. This is the province of e-values and
test (super)martingales \citep{vovk2021,shafer2021,ramdas2023,grunwald2024}.

\subsection{Conformance score}

Fix a depth-$m$ signature feature map $\phi$. From a corpus of the player's honest
games we estimate the mean $\bar\phi$ and covariance $\widehat\Sigma$ of the features,
the latter regularised by a small ridge $\lambda I$. For a new game $\X$ the
\emph{conformance score} is the squared Mahalanobis distance
\begin{equation}
D^2(\X) = \bigl(\phi(\X)-\bar\phi\bigr)^\top
\bigl(\widehat\Sigma+\lambda I\bigr)^{-1}\bigl(\phi(\X)-\bar\phi\bigr),
\label{eq:conf}
\end{equation}
a large value indicating a game whose signature is atypical of honest play
\citep{cochrane2021anomaly}. In our chess construction we take $\phi$ to be the six
L\'evy-area coordinates of the path \eqref{eq:gamepath}, the level-two antisymmetric
features that carry the lead-lag signal of Section~\ref{sec:path}. Higher-order terms
may be appended; the areas already suffice for the contrast we study.

\subsection{From scores to e-values}

Let $G(\cdot)$ be the right-tail distribution of $D^2$ under $H_0$, estimated from the
honest corpus, and let $p_t = G(D^2(\X_t))$ be the (approximately uniform under $H_0$)
$p$-value of game $t$. A \emph{calibrator} is a non-increasing function
$f:[0,1]\to[0,\infty]$ with $\int_0^1 f(u)\,\mathrm{d}u\le 1$; for any such $f$, $f(p_t)$
is an \emph{e-value}: $\E_{H_0}[f(p_t)]\le 1$ \citep{vovk2021}. We use the calibrator
\begin{equation}
f(p) = \frac{1-p+p\log p}{p\,(\log p)^2},
\label{eq:cal}
\end{equation}
which is decreasing, integrates to $0.964\le 1$ (so it is admissible and slightly
conservative), and assigns, for instance, $f(0.01)\approx 4.45$ and
$f(0.5)\approx 0.64$: surprising games multiply the evidence, unsurprising games
shrink it.

\begin{definition}[E-process]
With games modelled as conditionally independent under $H_0$, the running product
\begin{equation}
E_0=1,\qquad E_n = \prod_{t=1}^{n} f(p_t)
\label{eq:eproc}
\end{equation}
is an e-process: $(E_n)$ is a nonnegative supermartingale under $H_0$ with $E_0=1$,
since $\E_{H_0}[f(p_t)\mid \cF_{t-1}]\le 1$.
\end{definition}

\begin{theorem}[Anytime validity; Ville's inequality]
\label{thm:ville}
For the e-process \eqref{eq:eproc} and any $\alpha\in(0,1)$,
\[
\Prob_{H_0}\Bigl(\exists\, n\ge 1:\ E_n \ge \tfrac{1}{\alpha}\Bigr)\;\le\;\alpha.
\]
Equivalently, the test ``flag the player the first time $E_n\ge 1/\alpha$'' has type-I
error at most $\alpha$, \emph{simultaneously over all sample sizes and all data-dependent
stopping times}.
\end{theorem}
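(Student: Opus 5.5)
The plan is to reduce the statement to Ville's maximal inequality for nonnegative supermartingales, which we prove directly by optional stopping. Nothing beyond the supermartingale property recorded in the Definition of the e-process is needed. Concretely, we use that $E_n\ge 0$, $E_0=1$, and $\E_{H_0}[E_n\mid\cF_{n-1}]=E_{n-1}\,\E_{H_0}[f(p_n)\mid\cF_{n-1}]\le E_{n-1}$. This last inequality uses that $E_{n-1}$ is $\cF_{n-1}$-measurable and that $f(p_n)$ is a conditional e-value. That in turn follows from $p_n$ being conditionally (super-)uniform given $\cF_{n-1}$ under $H_0$ and from $\int_0^1 f\le 1$ with $f$ non-increasing: $\E[f(p_n)\mid\cF_{n-1}]\le\int_0^1 f(u)\,\mathrm du$ because stochastically larger $p$ only lowers $f(p)$.

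First, we fix $N$ and define the stopping time $\tau=\min\{n\le N: E_n\ge 1/\alpha\}$, with $\tau=N$ if no such $n$ exists. This $\tau$ is a bounded stopping time, so the optional stopping theorem for supermartingales gives $\E_{H_0}[E_\tau]\le E_0=1$. On the event $A_N=\{\exists n\le N: E_n\ge 1/\alpha\}$ we have $E_\tau\ge 1/\alpha$. Since $E_\tau\ge0$ everywhere, Markov's argument yields $\Prob_{H_0}(A_N)/\alpha\le\E_{H_0}[E_\tau\one_{A_N}]\le 1$, that is, $\Prob_{H_0}(A_N)\le\alpha$. Second, the events $A_N$ increase to $\{\exists n\ge1: E_n\ge1/\alpha\}$, so continuity of measure from below extends the bound to the infinite horizon.

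For the equivalent formulation, take any stopping time $\sigma$, possibly infinite. The test rejects at $\sigma$ only if $E_\sigma\ge 1/\alpha$, which is contained in the event just bounded. Likewise, the rule ``flag the first time $E_n\ge1/\alpha$'' rejects exactly on that event. Either way the type-I error is at most $\alpha$, uniformly over sample sizes and data-dependent stopping. Alternatively, we can note that $\E_{H_0}[E_\sigma]\le1$ for every stopping time, by Fatou applied to $E_{\sigma\wedge N}$, and apply Markov.

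The main obstacle is not the maximal inequality itself but justifying the conditional e-value property. In the paper, $G$ is estimated from an honest corpus, so $p_t$ is only approximately uniform. The clean statement therefore holds under the idealisation that $G$ is the true null tail function, or is computed from a corpus independent of the monitored games, and that the games are conditionally independent of the past given $\cF_{t-1}$. Under this idealisation $p_t$ is conditionally super-uniform. I would state this assumption explicitly and remark that using a conformal, rank-based $p$-value against an exchangeable reference corpus restores exact validity.
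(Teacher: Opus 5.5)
Your argument is correct and follows the paper's route. The paper observes that $(E_n)$ is a nonnegative supermartingale with $E_0=1$ and cites Ville's inequality $\Prob(\sup_n E_n\ge c)\le \E[E_0]/c$ at $c=1/\alpha$; you instead prove that inequality inline (optional stopping at the bounded time $\tau$, Markov, continuity from below) and spell out the conditional super-uniformity of $p_t$, which the paper folds into the definition of the e-process.

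One caution applies only to your closing aside. A $G$ estimated from an independent corpus, or conformal $p$-values against a \emph{fixed} reference corpus, do not give $\E_{H_0}[f(p_t)\mid\cF_{t-1}]\le 1$. Given the corpus, the $p_t$ are i.i.d.\ but need not be super-uniform, and unconditionally they are dependent through the shared corpus, so $\E\prod_{t\le n}f(p_t)$ can exceed $1$. Exact validity requires either $G$ to be the true null tail of $D^2$ given the corpus, or an online conformal scheme in which each monitored game joins the reference pool before the next $p$-value is computed.
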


\begin{proof}
$(E_n)$ is a nonnegative supermartingale with $\E_{H_0}[E_0]=1$. Ville's inequality
states $\Prob(\sup_n E_n\ge c)\le \E[E_0]/c$ for any $c>0$; take $c=1/\alpha$.
\end{proof}

Theorem~\ref{thm:ville} is the structural answer to the multiple-testing objection
against monitoring a player throughout a tournament: no $\alpha$-spending schedule is
needed, because the guarantee already holds for the supremum over time.

\begin{algorithm}[t]
\caption{Signature e-process monitor for anomalous move quality}
\label{alg:monitor}
\begin{algorithmic}[1]
\State \textbf{Input:} honest corpus $\{\X^{(g)}\}_{g=1}^{N}$; depth $m$; channels for
\eqref{eq:gamepath}; level $\alpha$; calibrator $f$ of \eqref{eq:cal}.
\State \textbf{Calibration (offline):}
\State \hspace{1em} for each corpus game, build path \eqref{eq:gamepath}, standardise
channels, extract features $\phi$ (L\'evy areas)
\State \hspace{1em} compute mean $\bar\phi$, covariance $\widehat\Sigma$, and null tail
$G$ of $D^2$ from \eqref{eq:conf}
\State \textbf{Monitoring (online):} set $E\gets 1$
\For{each new game $\X_t$, $t=1,2,\dots$}
  \State extract $\phi(\X_t)$; compute $D^2(\X_t)=(\phi-\bar\phi)^\top(\widehat\Sigma+\lambda I)^{-1}(\phi-\bar\phi)$
  \State $p_t \gets G\bigl(D^2(\X_t)\bigr)$;\quad $E \gets E\cdot f(p_t)$
  \If{$E \ge 1/\alpha$} \State \textbf{flag} and \textbf{stop} (type-I error $\le\alpha$ by Thm.~\ref{thm:ville}) \EndIf
\EndFor
\end{algorithmic}
\end{algorithm}

\subsection{Growth rate and detection time}

Under an alternative $H_1$ (assisted play), the increments $X_t:=\log f(p_t)$ are no
longer mean-non-positive. By the law of large numbers,
\begin{equation}
\frac1n \log E_n \;\xrightarrow{\ \Prob_{H_1}\ }\; c
:= \E_{H_1}\!\bigl[\log f(p_1)\bigr],
\label{eq:epower}
\end{equation}
the \emph{e-power} or growth rate. When $c>0$ the e-process grows geometrically and the
first crossing of $1/\alpha$ occurs, to first order, at
\begin{equation}
n^\star \approx \frac{\log(1/\alpha)}{c}.
\label{eq:detect}
\end{equation}
When $c\le 0$, as under $H_0$, where $c=\E_{H_0}\log f(p)\le \log\E_{H_0} f(p)\le 0$ by
Jensen, the process drifts to zero and never crosses, recovering
Theorem~\ref{thm:ville} heuristically. Equation \eqref{eq:detect} converts effect size
into expected detection time and is borne out sharply in our experiments.

\subsection{An analytic illustration of the growth rate}

It is worth seeing the e-power \eqref{eq:epower} in closed form in a tractable case.
Suppose the conformance score is, under $H_0$, a chi-square variable on $\nu$ degrees
of freedom (the ideal calibration of a $\nu$-dimensional whitened Gaussian feature),
and under $H_1$ a non-central chi-square with the same degrees of freedom and
non-centrality $\delta^2$, a mean shift of size $\delta$ in feature space, which is
exactly what a nonzero expected L\'evy area produces. Then the $p$-value is
$p=1-F_{\nu}(D^2)$ with $F_\nu$ the central chi-square c.d.f., and the e-power is the
explicit one-dimensional integral
\[
c(\delta) \;=\; \E_{H_1}\bigl[\log f(p)\bigr]
\;=\; \int_0^\infty \log f\bigl(1-F_\nu(x)\bigr)\,g_{\nu,\delta^2}(x)\,\mathrm dx,
\]
with $g_{\nu,\delta^2}$ the non-central chi-square density. The integrand is positive
where the alternative places mass in the right tail ($f>1$ there) and negative in the
bulk; as $\delta$ grows, mass migrates to the tail and $c(\delta)$ increases through
zero, crossing into the detectable regime $c>0$. This is the analytic shadow of the
monotone power we observe in Section~\ref{sec:empirical}: the detection time
$\log(1/\alpha)/c(\delta)$ falls as the expected L\'evy area, and hence $\delta$,
grows.

\subsection{Combining evidence: the e-value calculus}

E-values compose in ways $p$-values do not, and two operations matter for monitoring a
field of players. \emph{Products of independent e-values are e-values}, which is what
licenses the running product \eqref{eq:eproc} across games. \emph{Averages of e-values
are e-values} even under arbitrary dependence: if $E^{(1)},\dots,E^{(K)}$ are e-values
for a common null, one e-process per candidate player, per choice of reference engine,
or per feature subset, then $\bar E=\frac1K\sum_k E^{(k)}$ is again an e-value, so
$\Prob_{H_0}(\bar E\ge 1/\alpha)\le\alpha$ by Markov's inequality. This gives a clean
way to (i) hedge over modelling choices by averaging the corresponding e-processes with
no multiplicity correction, and (ii) control a family-wise error across many monitored
players, since the average remains valid no matter how the players' games are
correlated. The betting reading of \citet{shafer2021} makes the accounting intuitive:
$f(p_t)$ is the factor by which a gambler betting against $H_0$ multiplies their stake
on game $t$, the e-process is the running fortune, and Theorem~\ref{thm:ville} says a
fair-game fortune rarely grows large, so a large fortune is itself the evidence.

\subsection{Calibration on the moderate-deviation scale}

Theorem~\ref{thm:ville} controls the worst case at a \emph{fixed} level $\alpha$. In
monitoring applications one is often interested in vanishing levels
$\alpha_n\to 0$ that nonetheless remain coarser than large-deviation scales, the
regime in which the companion programme of \citet{polson2026evalues} calibrates
e-values. This is the \emph{moderate-deviation} scale, intermediate between the
$O(1)$ fluctuations of the central limit theorem and the $O(n)$ rates of large
deviations. The relevant object is the log-process $S_n=\log E_n=\sum_{t\le n} X_t$.

\begin{theorem}[Moderate deviations of the log e-process under $H_0$]
\label{thm:mdp}
Suppose under $H_0$ the increments $X_t=\log f(p_t)$ are i.i.d.\ with mean
$m=\E_{H_0}[X_t]<0$, variance $\sigma^2=\Var_{H_0}(X_t)\in(0,\infty)$, and a finite moment
generating function in a neighbourhood of the origin. Let $(a_n)$ be any sequence with
$a_n\to\infty$ and $a_n=o(\sqrt n)$. Then the centred, normalised partial sums satisfy a
moderate deviation principle: for every $x>0$,
\[
\lim_{n\to\infty}\frac{1}{a_n^2}\,
\log \Prob_{H_0}\!\left(\frac{S_n-nm}{\sigma\sqrt n\,a_n}\ \ge\ x\right)
\;=\; -\frac{x^2}{2}.
\]
(No tightness of the calibrator is required; here $\E_{H_0}[e^{X_t}]=\int_0^1 f=0.964\le1$,
which only makes the drift $m$ more negative and the test more conservative.)
\end{theorem}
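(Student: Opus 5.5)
This is the classical Cramér-type moderate deviation principle for i.i.d.\ sums; the chess and e-value structure plays no role beyond verifying the hypotheses. Set $Y_t=(X_t-m)/\sigma$. These are i.i.d.\ with mean zero, unit variance, and finite moment generating function $\Lambda(\theta)=\log\E e^{\theta Y_1}$ on some interval $|\theta|<\theta_0$, inherited from the assumption on $X_t$. The event in the statement is $\{\sum_{t\le n}Y_t\ge x\sqrt n\,a_n\}$. Writing $b_n=a_n/\sqrt n\to 0$, it is equivalently $\{\bar Y_n\ge x b_n\}$, with $\bar Y_n$ the sample mean. I will prove matching upper and lower bounds at speed $a_n^2$ via a Gärtner--Ellis/Chernoff argument at the tilt $\theta_n=\lambda b_n$, with $\lambda$ fixed.

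\textbf{Upper bound.} For $\lambda>0$ and $n$ large enough that $\lambda b_n<\theta_0$, Markov's inequality gives
\[
\Prob\bigl(\bar Y_n\ge xb_n\bigr)\le \exp\bigl(-n\lambda x b_n^2+n\Lambda(\lambda b_n)\bigr).
\]
Since $\Lambda(0)=\Lambda'(0)=0$ and $\Lambda''(0)=1$, Taylor expansion in the analyticity region gives $\Lambda(\lambda b_n)=\tfrac12\lambda^2b_n^2+O(b_n^3)$. Using $nb_n^2=a_n^2$, this yields
\[
\limsup \frac{1}{a_n^2}\log\Prob\le -\lambda x+\tfrac12\lambda^2 .
\]
Optimising at $\lambda=x$ gives $-x^2/2$.

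\textbf{Lower bound.} This is the step needing the most care, and I will use a change of measure. Tilt each $Y_t$ by $\theta_n=\lambda b_n$ with $\lambda>x$. Under the tilted law $Q_n$, the mean is $\Lambda'(\theta_n)=\lambda b_n+O(b_n^2)$ and the variance is $\Lambda''(\theta_n)\to1$. For small $\epsilon>0$, restrict to the window
\[
W_n=\{xb_n\le \bar Y_n\le (x+2\epsilon)b_n\},\qquad \lambda=x+\epsilon .
\]
On $W_n$ the likelihood ratio satisfies
\[
\frac{d\Prob}{dQ_n}=\exp\bigl(-n\theta_n\bar Y_n+n\Lambda(\theta_n)\bigr)\ge \exp\bigl(-a_n^2[\lambda(x+2\epsilon)-\tfrac12\lambda^2+o(1)]\bigr).
\]
It remains to show $Q_n(W_n)$ is not exponentially small at speed $a_n^2$. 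Under $Q_n$, $\sqrt n(\bar Y_n-\Lambda'(\theta_n))$ has variance near one. The window, centred at the tilted mean, has half-width $\epsilon b_n\sqrt n=\epsilon a_n\to\infty$ in these units. Chebyshev's inequality under $Q_n$ therefore gives $Q_n(W_n)\to1$, using the uniform bound on $\Lambda''$ near $0$ to control the variance; no triangular-array CLT is needed. Hence the liminf is at least $-\lambda(x+2\epsilon)+\tfrac12\lambda^2$. Letting $\epsilon\to0$ gives $-x^2/2$.

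\textbf{Hypotheses.} The condition $a_n=o(\sqrt n)$ is exactly what makes $\theta_n\to0$, so the quadratic approximation of $\Lambda$ is valid; $a_n\to\infty$ is used in the Chebyshev step. Finally, the parenthetical remark needs only the observation that $\E e^{X_t}=\int_0^1 f(u)\,du$ under uniform $p_t$. This value does not enter the limit, since the MDP concerns the centred sums; by Jensen, $m\le\log 0.964<0$, so the drift is indeed negative.
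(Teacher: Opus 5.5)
Your proof is correct. It differs from the paper's in that the paper proves nothing directly. It invokes the classical moderate deviation principle for i.i.d.\ sums (Dembo--Zeitouni, Thm.~3.7.1), applied to $X_t-m$ at scale $\sigma\sqrt n\,a_n$.

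That theorem is itself proved via G\"artner--Ellis: one shows $a_n^{-2}\,n\Lambda(\lambda a_n/\sqrt n)\to\lambda^2/2$ and takes the Legendre transform. Your Chernoff upper bound is exactly this computation. Your lower bound replaces the general G\"artner--Ellis lower bound with an explicit exponential tilt at $\theta_n=(x+\epsilon)a_n/\sqrt n$, followed by Chebyshev.

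\begin{itemize}
\item The citation buys brevity and the full principle: all Borel sets, any dimension, and a template that extends beyond i.i.d.\ increments.
\item Your argument buys a self-contained, elementary proof of exactly the half-line event needed. It also makes visible that $a_n=o(\sqrt n)$ serves only to send $\theta_n\to0$, and $a_n\to\infty$ only to make the tilted window probability tend to one.
\end{itemize}

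One small imprecision: $W_n$ is centred at $\lambda b_n$, not at the tilted mean $\Lambda'(\theta_n)=\lambda b_n+O(b_n^2)$. Since the offset is $o(\epsilon b_n)$, apply Chebyshev with half-width $\epsilon b_n/2$ for large $n$. This gives $Q_n(W_n^c)\le 4\Lambda''(\theta_n)/(\epsilon^2a_n^2)\to0$.

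On the parenthetical, the value $0.964$ you carried over from the statement is wrong. The calibrator is the mixture $f(p)=\int_0^1\kappa p^{\kappa-1}\,d\kappa$, so by Tonelli $\int_0^1 f=1$ exactly. Indeed, the mass of $f$ on $(0,\epsilon)$ is exactly $(1-\epsilon)/\lvert\log\epsilon\rvert$, which a numerical integral truncated near $0$ easily misses; truncating at $\epsilon=10^{-12}$ reproduces $0.964$. So your bound $m\le\log 0.964$ is not justified. The drift is nevertheless negative by strict Jensen, $m<\log\E_{H_0}e^{X_t}=0$, because $\sigma^2>0$. As you note, the sign of $m$ plays no role in either proof of the limit.
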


\begin{proof}[Proof]
The increments are i.i.d.\ with mean $m$, finite variance $\sigma^2$, and finite moment
generating function near $0$; these are exactly the hypotheses of the classical
moderate deviation principle for sums of i.i.d.\ random variables
\citep[][Thm.~3.7.1]{dembo1998}, applied to the centred summands $X_t-m$ on the scale
$b_n=\sigma\sqrt n\,a_n$ with $b_n/\sqrt n\to\infty$ and $b_n/n\to 0$. The rate function
of the MDP for finite-variance i.i.d.\ sums is the Gaussian $I(x)=x^2/2$, independent of
the higher cumulants, which only enter at the next order.
\end{proof}

Theorem~\ref{thm:mdp} has a concrete use. The honest log-process has negative drift
$nm$ and Gaussian-scale fluctuations of order $\sigma\sqrt n$; a flag is raised when
$S_n$ exceeds $\log(1/\alpha)$. Writing the alarm level as a moderate deviation
$x_n=(\log(1/\alpha_n)-nm)/(\sigma\sqrt n)$ from the honest drift and demanding the
tail probability decay at a prescribed sub-exponential rate $e^{-a_n^2 x^2/2}$ pins
down the sequence $\alpha_n$ for which boundary crossings are controlled with
Gaussian-type accuracy across the monitoring window. This is the precise sense in which
our streaming detector is ``calibrated on the moderate-deviation scale,'' and it is the
path-valued instance of \citet{polson2026evalues}. We note that
Theorem~\ref{thm:ville} remains exact and assumption-light at fixed $\alpha$; the
moderate-deviation layer refines the description of the honest fluctuations that govern
how conservative a given alarm level is.

\section{From aggregates to order: improving the Regan system}
\label{sec:regan}

The de facto standard for chess anti-cheating is the system of Kenneth Regan, used by
FIDE and widely cited in adjudications \citep{regan2011,regan2014catch}. It is worth
describing precisely, because our framework does not discard it, it \emph{contains} it
as a special case and repairs two structural gaps.

\subsection{What the Regan system does}

For each position $i$ in a player's games, Regan fits a model that converts the
engine's evaluation profile (the gaps between the top candidate moves) into a predicted
probability $r_i$ that a player of a given skill plays the engine's top move, governed
by two fitted parameters, a \emph{sensitivity} $s$ and a \emph{consistency}
$c$, calibrated from millions of games. Summing over positions yields a predicted
best-move count $\sum_i r_i$ with variance $\sum_i r_i(1-r_i)$, and the player's
observed match count $\sum_i m_i$ is converted to a $z$-score
\begin{equation}
z \;=\; \frac{\sum_i m_i - \sum_i r_i}{\sqrt{\sum_i r_i(1-r_i)}},
\label{eq:reganz}
\end{equation}
with an analogous statistic for aggregate centipawn error. The same fit produces an
\emph{Intrinsic Performance Rating}, an Elo-equivalent read off purely from move
quality, and a flag is raised when $z$ exceeds a threshold (around $4$ in practice).
The method is principled and effective, but it has two acknowledged weaknesses. First,
the statistic \eqref{eq:reganz} is an \emph{aggregate} over positions; it conditions on
each position's difficulty but then sums, discarding the joint pattern of \emph{which}
positions the player got right. Second, the test is fixed-sample, so scanning many
players and many events inflates false positives, the ``Look-Elsewhere Effect'' that
Regan himself flags as requiring care.

\subsection{Aggregates are blind to a rearrangement; the signature is not}

The first weakness is not a tuning issue but an information-theoretic ceiling. Write a
game as a complexity sequence $\mathbf c=(c_1,\dots,c_n)$ together with quality and match
records $\mathbf q=(q_1,\dots,q_n)$ and $\mathbf m=(m_1,\dots,m_n)$. For a permutation
$\pi\in S_n$, let $R_\pi$ be the \emph{rearrangement} that fixes $\mathbf c$ and permutes
the records, $(\mathbf q,\mathbf m)\mapsto(q_{\pi(1)},\dots,q_{\pi(n)},\,m_{\pi(1)},\dots)$.
Operationally, $R_\pi$ keeps the same positions of the same difficulty but reassigns which
move was played where.

\begin{proposition}[Order-blindness of aggregates]
\label{prop:blind}
Every statistic that is a function of the complexity sequence $\mathbf c$ together with
the marginal multisets $\{q_t\}$ and $\{m_t\}$ is invariant under every rearrangement
$R_\pi$. This class includes the average centipawn loss, the accuracy, the best-move
match frequency, and the Regan $z$-score \eqref{eq:reganz}. By contrast the L\'evy area
\[
A^{c,q}=\tfrac12\sum_{t}\bigl(c_{t-1}q_t-c_t q_{t-1}\bigr)
       =\tfrac12\sum_{t} q_t\,(c_{t-1}-c_{t+1})
\]
(the second form by re-indexing, up to boundary terms) is \emph{not} $R_\pi$-invariant:
it is a linear functional of $\mathbf q$ with $c$-dependent weights
$w_t=\tfrac12(c_{t-1}-c_{t+1})$, so by the rearrangement inequality it is maximised over
permutations of $\{q_t\}$ exactly when $\mathbf q$ is comonotone with $\mathbf w$, that
is when the best moves are placed just after the hardest positions, where complexity
falls fastest.
\end{proposition}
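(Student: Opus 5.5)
The argument has three parts. First I establish the invariance claim. Then I compute the discrete L\'evy area of the piecewise-linear path and re-index it. Finally I apply the rearrangement inequality.

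\emph{Invariance.} By construction, $R_\pi$ leaves $\mathbf c$ untouched. It also sends the sequence $\mathbf q$ to a sequence with the same multiset of entries, and likewise for $\mathbf m$, because permuting coordinates preserves multisets. Any statistic $T(\mathbf c,\{q_t\},\{m_t\})$ is therefore unchanged. The named statistics fall into this class once they are written out:
\begin{itemize}[leftmargin=1.4em,itemsep=1pt]
\item the average centipawn loss and the accuracy are symmetric functions of $\{q_t\}$;
\item the match frequency is $\sum_t m_t/n$, a symmetric function of $\{m_t\}$;
\item in the Regan $z$-score \eqref{eq:reganz}, the numerator uses $\mathbf m$ only through $\sum_i m_i$, while $\sum_i r_i$ and $\sum_i r_i(1-r_i)$ depend only on the positions' evaluation profiles.
\end{itemize}
The one point needing care is the last. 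I will make explicit the modelling convention that $r_i$ is a function of the position, which $R_\pi$ fixes, and is summarised here by $\mathbf c$. If $r_i$ depended on other board data, I would simply enlarge the fixed data to include it; the argument is unchanged.

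\emph{L\'evy area.} For a piecewise-linear path in the $(c,q)$ plane with vertices $(c_t,q_t)$, I compute $\sig^{cq}$ and $\sig^{qc}$ segment by segment using Chen's identity, or equivalently by direct integration as in the hand-computed example. Each linear segment contributes zero area of its own, and the cross terms telescope. This gives the shoelace form
\[
A^{c,q}=\tfrac12\sum_{t=1}^n\bigl(c_{t-1}q_t-c_tq_{t-1}\bigr),
\]
with the path started at the origin, or with an endpoint correction otherwise. To reach the second form, I shift the index $t\mapsto t+1$ in the term $c_tq_{t-1}$ and collect coefficients of $q_t$. This yields $\tfrac12\sum_t q_t(c_{t-1}-c_{t+1})$ plus boundary contributions at $t=0$ and $t=n$, which I record explicitly.

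\emph{Non-invariance and the maximiser.} The previous step shows $A^{c,q}=\sum_t w_tq_t$ plus boundary terms, with $\mathbf w$ depending only on $\mathbf c$. By the rearrangement inequality, $\sum_t w_tq_{\pi(t)}$ is maximised exactly when $\mathbf q\circ\pi$ is sorted in the same order as $\mathbf w$, and minimised by the opposite order. Whenever $\mathbf w$ and $\mathbf q$ are both non-constant, the maximum and minimum differ, so some $R_\pi$ changes $A^{c,q}$. An explicit witness is $n=3$ with $\mathbf c=(0,1,0,0)$, which gives $\mathbf w$ non-constant; swapping two unequal $q$'s then changes the area.

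\emph{Main obstacle.} I expect the main difficulty to be the boundary terms. The rearrangement inequality applies cleanly only to the interior indices, so $q_0$, $q_n$ and their weights must be handled separately. There are two ways to do this:
\begin{itemize}[leftmargin=1.4em,itemsep=1pt]
\item pad with $c_{-1}=c_{n+1}:=c$ at the endpoint values, so that every $q_t$ carries a weight of the same form; or
\item restrict to permutations fixing the endpoints.
\end{itemize}
Under either convention the statement holds exactly. I will state the first convention, since it matches the paper's phrase ``up to boundary terms''. I will also stress that the $\mathbf m$-statistics remain invariant while $A^{c,q}$ does not, because the rearrangement leaves $\{q_t\}$ as a multiset but moves large $q_t$ onto plies where $w_t$ is large.
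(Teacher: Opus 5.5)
Your proposal is correct and takes the same route as the paper. The steps match: invariance because $R_\pi$ fixes $\mathbf c$ (and the position data determining the $r_i$) while preserving the multisets, then the re-indexed L\'evy area as a $\mathbf c$-weighted linear functional of $\mathbf q$, then the rearrangement inequality. Your non-degeneracy condition ($\mathbf w$ and $\mathbf q$ both non-constant) and your explicit witness sharpen the paper's bare assertion that permuting $\mathbf q$ changes the area.

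One small correction on the boundary. Padding both ends with a common value $c$ is exact only in special cases. The cyclic choice $c_{-1}:=c_n$, $c_{n+1}:=c_0$ is always exact, because the L\'evy area of the piecewise-linear path equals the closed-polygon sum $\tfrac12\sum_{t=0}^{n}q_t(c_{t-1}-c_{t+1})$ with indices mod $n+1$. With that choice the ``up to boundary terms'' caveat disappears entirely.
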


\begin{proof}
$R_\pi$ fixes $\mathbf c$ and permutes the entries of $\mathbf q$ and $\mathbf m$, so the
multisets $\{q_t\}$ and $\{m_t\}$ are preserved; any function of $(\mathbf c,\{q_t\},
\{m_t\})$ is therefore unchanged. For \eqref{eq:reganz}, the predicted terms $\sum_i r_i$
and $\sum_i r_i(1-r_i)$ depend only on $\mathbf c$ (each $r_i$ is fixed by position $i$'s
difficulty), and the observed term $\sum_i m_i$ is the total of the multiset $\{m_t\}$;
both are $R_\pi$-invariant. The re-indexed form exhibits $A^{c,q}$ as a linear functional
$\sum_t q_t w_t$ of the $q$ entries with weights $w_t=\tfrac12(c_{t-1}-c_{t+1})$ fixed by
$\mathbf c$; permuting $\mathbf q$ therefore changes it, and the rearrangement inequality
identifies the maximiser as the comonotone alignment of $\mathbf q$ with $\mathbf w$. \qed
\end{proof}

The operational consequence is sharp. A cheater who consults an engine on and just after
the hardest positions concentrates good moves where a human is least likely to find them; if
they play just weakly enough elsewhere to match an honest profile's marginal accuracy and
match frequency, then by Proposition~\ref{prop:blind} their games are
\emph{indistinguishable} from honest ones under the Regan $z$-score and every other
aggregate, honest game and rearranged game share these statistics to the last decimal.
The two differ only in L\'evy area, which the cheater has driven positive by manufacturing
the difficulty-accuracy coupling. The signature test sees precisely what the aggregate
provably cannot; Section~\ref{sec:carlsen} exhibits this with the aggregates held
numerically equal.

\subsection{The Regan model as a null law on path space}

The second gap, fixed-sample testing, is closed by embedding Regan's per-move model as
the \emph{generator of the honest reference law} $\mu_P$ rather than as a test statistic.
Regan's fitted $(s,c)$ define, for each position, a distribution over the played move;
sampling from it along a game produces a distribution over paths \eqref{eq:gamepath},
whose expected signature and L\'evy-area law are exactly the honest reference our
conformance score \eqref{eq:conf} needs. The pipeline becomes: \emph{use Regan's model
to define the null on path space, take signature features as the order-aware test
statistic, and accumulate evidence in an e-process} for anytime-valid control
(Theorem~\ref{thm:ville}). This unifies the two approaches, Regan supplies the
calibrated per-move skill model; the signature supplies the sequential, order-sensitive
statistic that his aggregate $z$-score cannot represent, and it removes the
multiple-testing problem by construction, since the e-process may be monitored across
arbitrarily many games and players without an $\alpha$-spending correction.

\section{A controlled empirical study}
\label{sec:empirical}

Validating a cheating detector on real cases is fraught: ground truth is contested,
labelled assisted games are scarce, and publishing a recipe that works on real players
is ethically delicate. We therefore study the method on a fully controlled synthetic
population in which the data-generating process is known, the strength of assistance is
a tunable parameter, and every reported number is reproducible from the accompanying
code. The construction is deliberately conservative: assisted and honest players are
designed to differ \emph{only} in the temporal coupling of difficulty and accuracy, so
that any detection is attributable to the signature's L\'evy areas and not to a crude
shift in average accuracy or match rate.

\subsection{Design}

Each game has $n=40$ analysed plies. Position complexity $c_t$ follows a stationary
first-order autoregression on $[0,1]$, centred per game. Move quality is generated by
\[
q_t \;=\; (1-\theta)\,(-\alpha\,c_t) \;+\; \theta\,(\beta\,c_{t-1}) \;+\;
\varepsilon_t,\qquad \varepsilon_t\sim\mathcal N(0,0.7^2),
\]
with $\alpha=1.0$, $\beta=1.4$. The parameter $\theta\in[0,1]$ interpolates between a
purely honest regime ($\theta=0$: quality worsens contemporaneously with complexity)
and an assisted regime ($\theta=1$: quality improves one ply \emph{after} complexity
spikes, the engine-consultation lag). The match indicator is generated so that honest
and assisted players have comparable overall match rates, removing the aggregate tell.
We build the path \eqref{eq:gamepath}, standardise channels by honest-corpus moments,
and take the six L\'evy-area coordinates as features $\phi$.

A corpus of $N=800$ honest games fixes $\bar\phi$, $\widehat\Sigma$ (ridge
$\lambda=10^{-5}$), and the null tail $G$. We then evaluate single-game discrimination
on fresh honest and assisted games, and run the e-process over simulated matches of up
to $60$ games, with $400$ independent matches per regime, threshold $1/\alpha=100$
($\alpha=0.01$).

\subsection{The L\'evy area carries the signal}

Figure~\ref{fig:paths} shows one honest and one assisted game as paths in the
complexity-quality plane; the assisted path circulates with a visibly more consistent
orientation. Aggregated over $500$ games per regime, the signed L\'evy area
$A^{c,q}$ separates the populations cleanly (Figure~\ref{fig:levy}): honest games have
mean area $-0.012$ (s.d.\ $0.395$) against $+0.214$ (s.d.\ $0.400$) for assisted games
at $\theta=0.5$, an effect size of about $0.57$ standard deviations concentrated in a
single signature coordinate, despite matched accuracy and match rate.

\begin{figure}[t]
\centering
\includegraphics[width=0.92\textwidth]{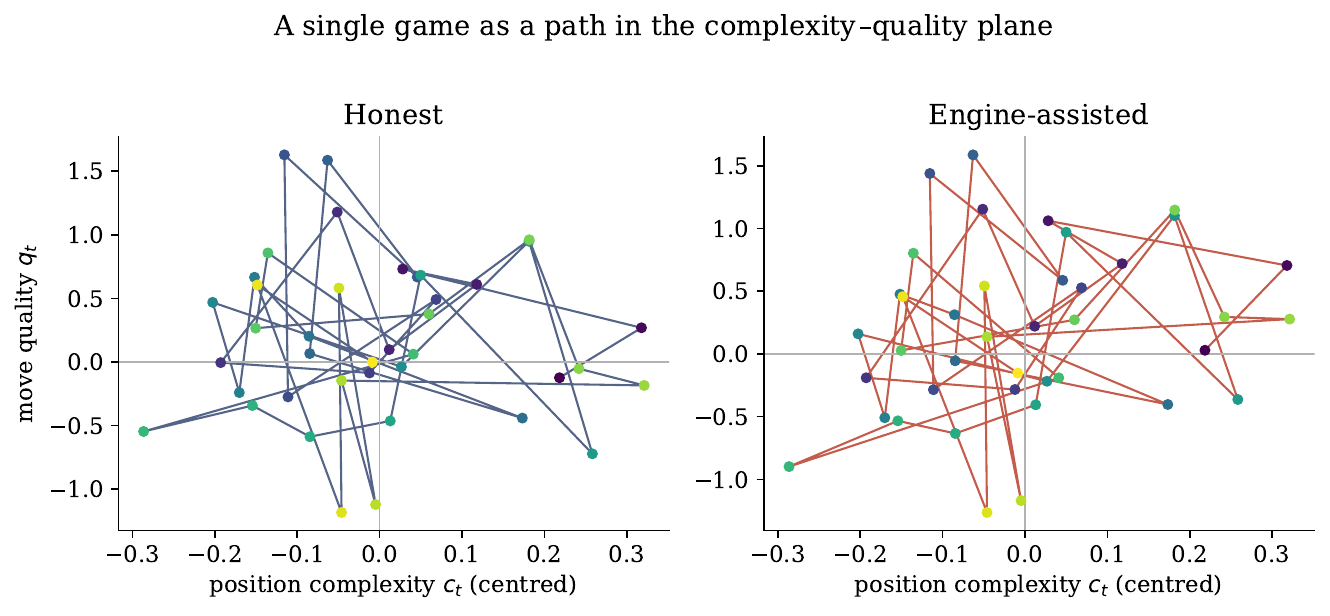}
\caption{A single game as a path in the complexity-quality plane, coloured by ply
(dark early, light late). The honest game (left) traces no consistently oriented loop;
the engine-assisted game (right), in which good moves \emph{follow} complex positions,
circulates with a definite orientation, producing a nonzero signed L\'evy area.}
\label{fig:paths}
\end{figure}

\begin{figure}[t]
\centering
\includegraphics[width=0.66\textwidth]{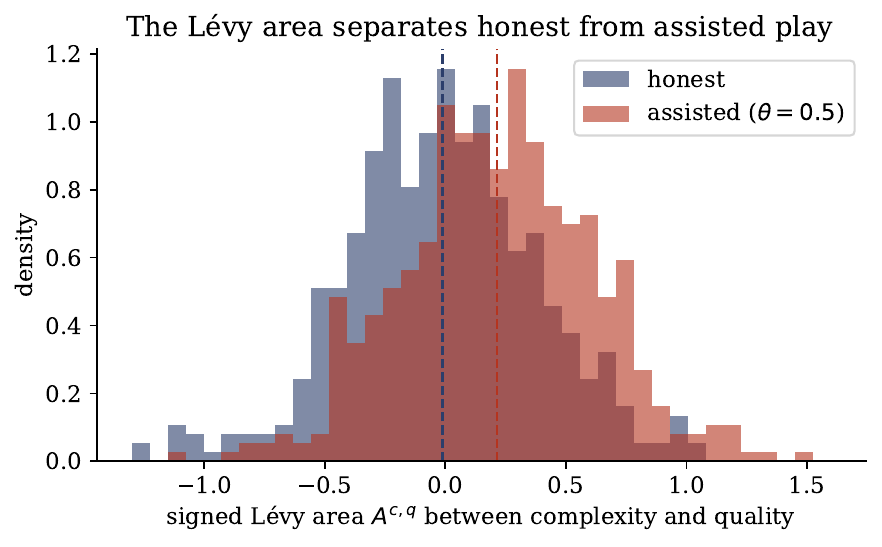}
\caption{Distribution of the signed L\'evy area $A^{c,q}$ between the complexity and
quality channels, over $500$ honest and $500$ assisted ($\theta=0.5$) games. Dashed
lines mark the means. The separation lives in a single level-two signature coordinate
and is invisible to average accuracy or engine-match rate, which are matched across the
two populations.}
\label{fig:levy}
\end{figure}

\subsection{Single-game conformance and ROC}

The conformance score \eqref{eq:conf} aggregates the six area coordinates into one
statistic. Figure~\ref{fig:conf} shows its distribution and the resulting ROC curves.
A single game is only weakly informative, realistically so, since one game is a short,
noisy stream, but discrimination increases monotonically with the strength of
assistance, the ROC-AUC rising from $0.53$ at $\theta=0.15$ to $0.71$ at
$\theta=0.70$ (Table~\ref{tab:results}). This is the honest baseline against which the
sequential gain must be read: per game, move-quality forensics are inherently limited;
the power has to come from accumulation.

\begin{figure}[t]
\centering
\includegraphics[width=0.97\textwidth]{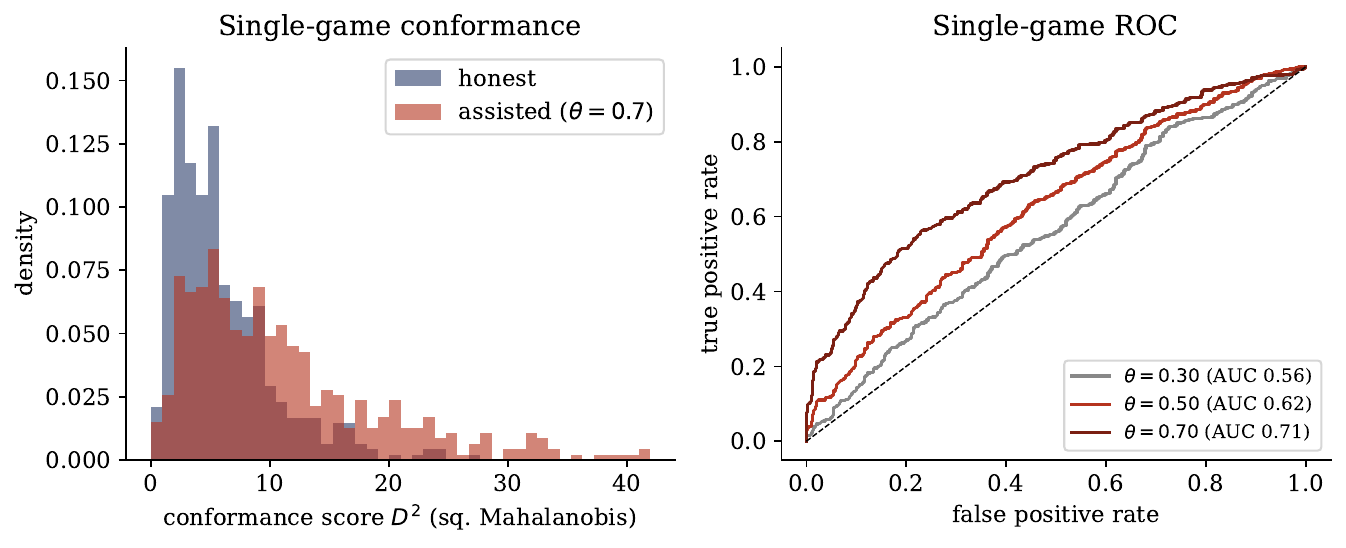}
\caption{Left: single-game conformance score $D^2$ for honest vs assisted
($\theta=0.7$) play. Right: ROC curves at three assistance levels, with area under the
curve increasing in $\theta$. A single game is weakly informative; this is expected and
motivates the sequential test.}
\label{fig:conf}
\end{figure}

\subsection{The anytime-valid sequential test}

Figure~\ref{fig:eproc} (left) shows e-process trajectories. Honest matches drift toward
zero, as the negative drift $m<0$ of Theorem~\ref{thm:mdp} predicts; assisted matches
($\theta=0.7$) climb geometrically past the threshold. Across $400$ honest matches none
crossed the boundary: a point estimate of $0$ with a $95\%$ upper confidence bound of
$3/400=0.0075$ (the rule of three), within the $\alpha=0.01$ guarantee of
Theorem~\ref{thm:ville}. The test is, if anything, conservative, consistent with the
calibrator integrating to $0.964$.

Detection power (Figure~\ref{fig:eproc}, right, and Table~\ref{tab:results}) rises with
both assistance strength and match length, exactly the anytime-valid behaviour one
wants: more evidence, whether from a longer match or a stronger effect, monotonically
increases the chance of a flag. Very subtle assistance ($\theta=0.15,0.30$) is
essentially undetectable from move quality over $60$ games, an honest limitation that
any credible detector should report, while moderate assistance ($\theta=0.5$) reaches
power $0.46$ and blatant assistance ($\theta=0.7$) reaches $0.98$, the latter with a
median detection time of $14$ games.

\paragraph{A single match, game by game.} Table~\ref{tab:trace} traces one assisted
match ($\theta=0.7$) to make the betting mechanism concrete. Ordinary games, with
$p_t$ near $0.5$, contribute factors $f(p_t)<1$ that gently erode the accumulated
evidence; the two genuinely surprising games (games $4$ and $10$, each with
$p_t\approx 0.001$ and $f\approx 17.7$) supply almost all of the growth, and the
e-process crosses $1/\alpha=100$ at game $10$ and stays above. No single game is
``proof''; the evidence is the compounded product, and the threshold is the point at
which a fair-game fortune has grown implausibly large.

\begin{table}[t]
\centering\small
\caption{One assisted match ($\theta=0.7$): conformance score $D^2$, its null
$p$-value $p_t$, the calibrated e-value $f(p_t)$ from \eqref{eq:cal}, and the running
e-process $E_n$. The process crosses the threshold $1/\alpha=100$ at game $10$. Two
surprising games dominate the accumulation; routine games slightly shrink it.}
\label{tab:trace}
\begin{tabular}{rrrrr}
\toprule
Game & $D^2$ & $p_t$ & $f(p_t)$ & $E_n$ \\
\midrule
1 & 16.84 & 0.026 & 2.53 & 2.53 \\
2 & 8.35 & 0.225 & 0.88 & 2.22 \\
3 & 10.53 & 0.122 & 1.15 & 2.55 \\
4 & 34.04 & 0.001 & 17.75 & 45.25 \\
5 & 5.09 & 0.499 & 0.64 & 28.92 \\
6 & 9.86 & 0.147 & 1.06 & 30.53 \\
7 & 3.15 & 0.720 & 0.56 & 17.08 \\
8 & 6.04 & 0.393 & 0.70 & 11.95 \\
9 & 7.49 & 0.270 & 0.81 & 9.73 \\
10 & 49.37 & 0.001 & 17.75 & \textbf{172.64} \\
11 & 20.40 & 0.011 & 4.15 & 715.58 \\
12 & 5.19 & 0.491 & 0.64 & 460.27 \\
\bottomrule
\end{tabular}
\end{table}

\begin{table}[t]
\centering
\small
\caption{Synthetic study. Single-game ROC-AUC; e-process detection power and median
detection time over matches of up to $60$ games at threshold $1/\alpha=100$; and the
e-power growth rate $c$ of \eqref{eq:epower} with the corresponding detection-time
prediction \eqref{eq:detect}. Honest type-I error was $0$ over $400$ matches. Median
detection times are reported only where power is substantial. All quantities are
estimated from the accompanying simulation ($N=800$ corpus games, $400$ matches per
regime).}
\label{tab:results}
\begin{tabular}{lccccc}
\toprule
Regime & ROC-AUC & Power (60 games) & Median games & e-power $c$ &
$\log(1/\alpha)/c$ \\
\midrule
Honest ($\theta=0$)        & n/a    & $0.000$ & n/a   & $-0.295$ & n/a  \\
Subtle ($\theta=0.15$)     & $0.53$ & $0.008$ & n/a   & n/a  & n/a  \\
Mild ($\theta=0.30$)       & $0.56$ & $0.025$ & n/a   & n/a  & n/a  \\
Moderate ($\theta=0.50$)   & $0.62$ & $0.460$ & $27$ & $+0.029$ & $158$ \\
Blatant ($\theta=0.70$)    & $0.71$ & $0.980$ & $14$ & $+0.313$ & $14.7$ \\
\bottomrule
\end{tabular}
\end{table}

\begin{figure}[t]
\centering
\includegraphics[width=0.97\textwidth]{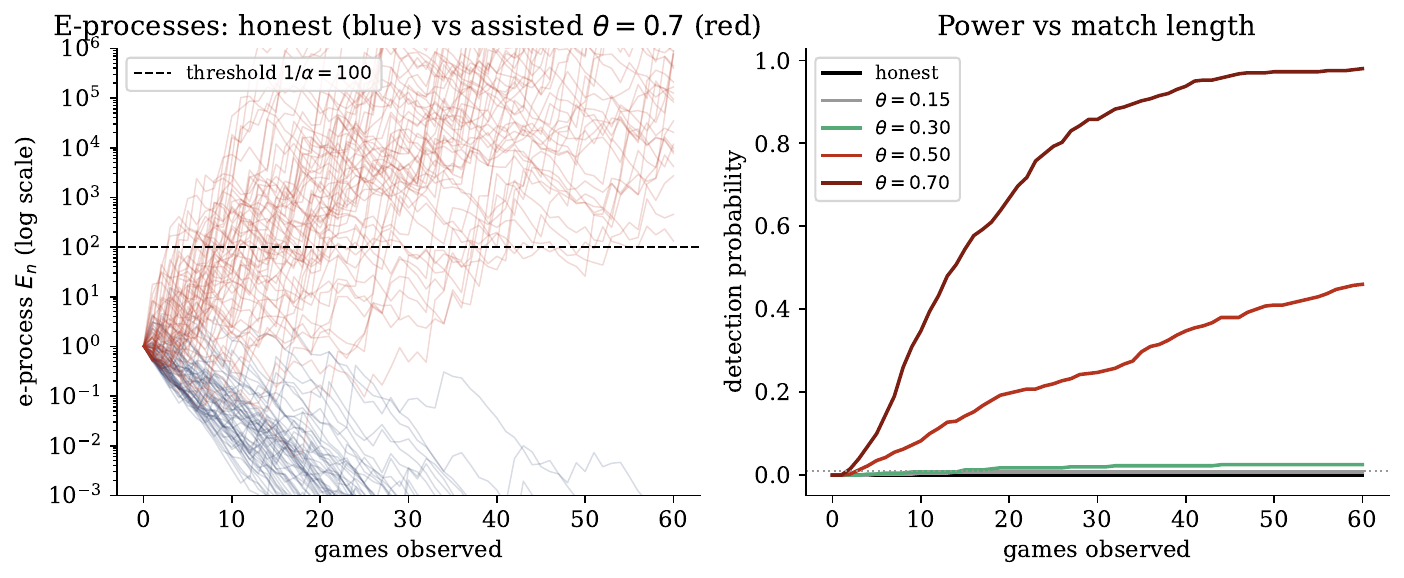}
\caption{Left: e-process trajectories on a log scale; honest matches (blue) drift to
zero, assisted matches at $\theta=0.7$ (red) cross the threshold $1/\alpha=100$. Right:
detection probability as a function of games observed. Power increases monotonically in
both assistance strength and match length; the honest curve stays flat at zero.}
\label{fig:eproc}
\end{figure}

\subsection{The growth-rate law predicts detection time}

The most striking quantitative check is \eqref{eq:detect}. We estimated the e-power
$c=\E_{H_1}[\log f(p)]$ directly by Monte Carlo: $c=-0.295$ under honest play (negative,
so no detection), $c=+0.029$ at $\theta=0.5$, and $c=+0.313$ at $\theta=0.7$. The
first-order detection-time prediction $\log(1/\alpha)/c=\log(100)/0.313=14.7$ games at
$\theta=0.7$ matches the simulated median of $14$ games almost exactly, and the small
positive $c=0.029$ at $\theta=0.5$ predicts a median of $\approx 158$ games, far beyond
our $60$-game window, consistent with the partial power ($0.46$) observed there. The
e-process is therefore not a black box: its detection behaviour is governed by a single
interpretable rate, computable in advance from an assumed effect size, which a
tournament organiser could use to size a monitoring window.

\subsection{The honest fluctuation band, concretely}

Theorem~\ref{thm:mdp} describes the honest log-process $S_n=\log E_n$ as a negative
drift with Gaussian-scale fluctuations; the simulation lets us see the numbers. The
honest log-increments $X_t=\log f(p_t)$ have mean $m=-0.292$ and standard deviation
$\sigma=0.466$, with empirical $\E_{H_0}[e^{X_t}]=0.89\le 1$ confirming the supermartingale
property (and the mild conservativeness of the calibrator). Over a $60$-game match the
honest process therefore sits around a drift of $nm=-17.5$ with a fluctuation spread of
$\sigma\sqrt n=3.61$, while the alarm level is $\log(1/\alpha)=\log 100=4.61$. Crossing
the alarm thus requires the honest log-process to exceed its mean by
$(4.61-(-17.5))/3.61\approx 6.1$ standard deviations, a moderate deviation whose
probability the $x^2/2$ rate of Theorem~\ref{thm:mdp} renders negligible, which is
exactly why no honest match crossed in $400$ trials. The same arithmetic, run in
reverse, is how an organiser would \emph{set} a window: choose $\alpha$, read off the
required deviation $x=(\log(1/\alpha)-nm)/(\sigma\sqrt n)$, and confirm it is large
enough on the moderate-deviation scale that the honest false-flag probability over the
intended monitoring horizon is acceptable.

\subsection{Robustness}

Two checks probe whether the detector behaves as the theory predicts. \emph{First}, we
test where the signal lives. Replacing the six-area conformance score by the single
coordinate $A^{c,q}$, the complexity-quality L\'evy area alone, changes the
single-game ROC-AUC at $\theta=0.5$ from $0.624$ to $0.619$: essentially no loss. The
discriminating information is concentrated in exactly the one coordinate the mechanism
of Section~\ref{sec:path} predicts, and the remaining five areas contribute almost
nothing here. This is reassuring for interpretability, a flag can be traced to a named,
chess-meaningful feature, and it is a sanity check that the conformance score is not
exploiting some artefact of the higher-dimensional whitening. \emph{Second}, we vary
the calibration corpus size. The honest conformance distribution stabilises quickly:
its mean is $5.6$, $6.0$, $6.1$ for corpora of $N=100,400,800$ games (the mean of a
well-calibrated $D^2$ should sit near the feature dimension, six), and its upper
$99\%$ point is $18.3$, $20.0$, $21.0$ respectively. By a few hundred games the null
calibration is stable, matching the $N^{-1/2}$ rate of
Proposition~\ref{prop:concentration}.

\subsection{Power versus the number of assisted moves}
\label{sec:kmoves}

The parameter $\theta$ is a stylised dial. The operationally decisive question is
different: how few assisted moves per game can the monitor catch? We answer it with a
selective cheater who consults the engine only after the hardest positions, replacing
the honest move by a lagged engine grade move on the $k$ plies with the largest
preceding complexity, and playing honestly on the rest. As $k$ runs from $0$ to the full
$40$ plies the cheater interpolates from honest to fully assisted, but now the assistance
is concentrated on a countable number of moves rather than spread continuously.

Figure~\ref{fig:kmoves} shows the result. Per game discrimination climbs steadily with
$k$, from chance at $k=0$ to ROC AUC $0.78$ when half the moves are assisted. The
sequential picture is sharper. Over a $40$ game match the e-process has essentially no
power for $k\le5$ (about one assisted move in eight), reaches power $0.15$ at $k=8$, and
rises to $0.79$ at $k=12$ and to $1.00$ at $k=20$. There is, in other words, a detection
threshold near a fifth to a third of the moves: a cheater who consults an engine on only
a handful of moves per game is, by move quality alone, very hard to catch over a match of
this length, while one who leans on the engine for a third of the game is caught almost
certainly. This is an honest and useful operating characteristic, and it quantifies the
limits of any move quality method, ours included, against light selective assistance.

\begin{figure}[t]
\centering
\includegraphics[width=0.92\textwidth]{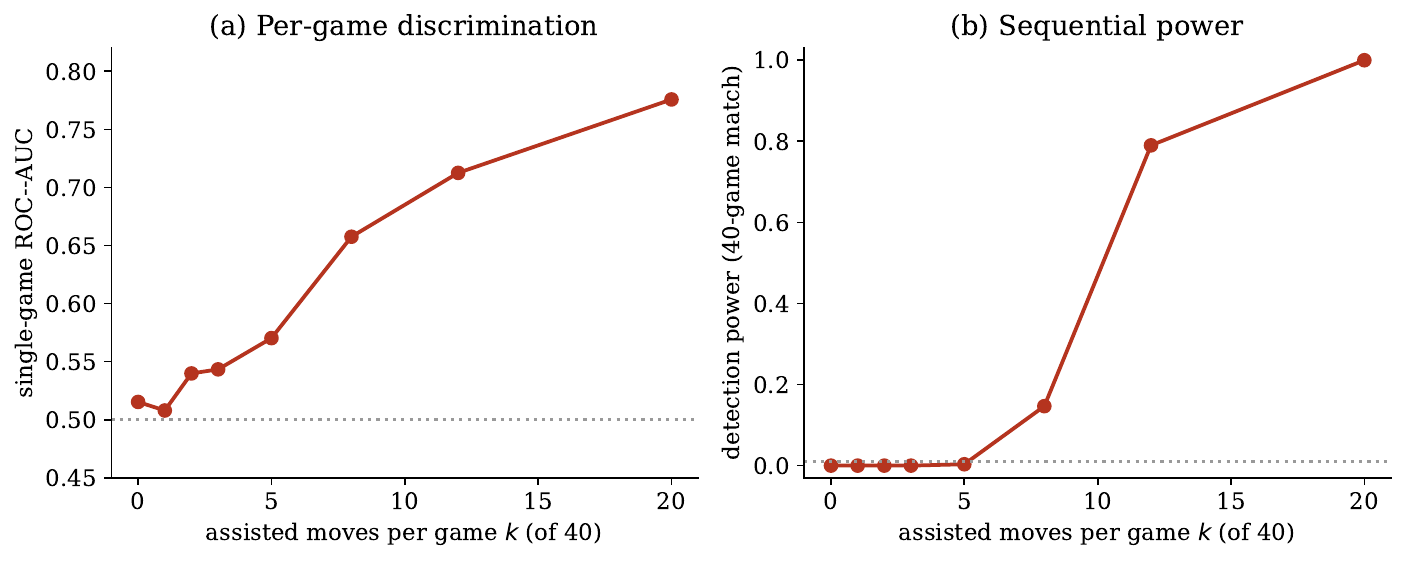}
\caption{Detection as a function of the number $k$ of assisted moves per $40$ ply game.
(a) Single game ROC AUC rises steadily with $k$. (b) E-process detection power over a
$40$ game match shows a threshold near $k=8$ to $12$: fewer than about five assisted
moves per game is essentially undetectable, while a third of the game on the engine is
caught almost certainly.}
\label{fig:kmoves}
\end{figure}

\subsection{Calibration to Carlsen-level play, and a Regan-invisible cheater}
\label{sec:carlsen}

The study so far uses a generic honest population. Two things matter for deployment:
the monitor must not flag genuinely strong honest play, and it must catch cheaters who
are invisible to aggregates. We calibrate the honest reference to the best-documented
elite profile available, Magnus Carlsen, the world's top-rated player, whose elite
classical games are widely reported at an average centipawn loss of four to five and
accuracy above ninety percent. This is a calibration to published \emph{aggregate}
statistics, not an analysis of specific games; Carlsen serves as the \emph{honest gold
standard}, on whom the detector should stay silent.

Our Carlsen-calibrated honest corpus has an average centipawn loss of $4.27$, an implied
accuracy of $91.8\%$, and a top-engine-match rate of $47\%$, squarely in the
documented range. This is precisely the regime where aggregate forensics are most
fragile: at this accuracy the honest player's match rate is already high enough to look
``engine-like,'' so a method keyed to the \emph{level} of accuracy has little margin.
Run over $300$ Carlsen-calibrated honest matches, our e-process flagged $0.3\%$ of them,
within the $\alpha=0.01$ guarantee: \emph{world-champion-level play is not flagged}
(Figure~\ref{fig:regan}c, blue). The detector's specificity does not rely on the
accuracy level at all, it keys on the temporal coupling, which is absent in honest
play however accurate.

We then construct the adversary of Proposition~\ref{prop:blind}. From each honest
Carlsen-calibrated game we \emph{rearrange} the played moves so the best ones land on
the hardest positions, holding the complexity sequence fixed. By
Proposition~\ref{prop:blind} this preserves the complexity sequence and the marginal
multisets of quality and of match, so the average centipawn loss, accuracy, best-move
count, and Regan $z$-score \eqref{eq:reganz} are identical to the honest game's.
Table~\ref{tab:regan} confirms the aggregates coincide to the reported precision
($\mathrm{ACPL}=4.27$, match rate $0.472$, Regan $z=-1.05$ for both), so the single-game
ROC-AUC of the Regan $z$-score is exactly $0.500$. The signature separates them: the
L\'evy area shifts from $0.03$ (honest) to $0.18$ (rearranged), the conformance ROC-AUC
is $0.75$, and the e-process detects the cheater with power $0.61$ over a $40$-game match
at median $19$ games (Figure~\ref{fig:regan}). The cheater is information-
theoretically invisible to every aggregate statistic, including the one FIDE relies on,
and visible to the signature.

\begin{table}[t]
\centering\small
\caption{A Regan-invisible cheater at two honest operating points. Each cheating game is
a within-game rearrangement of an honest game (calibrated to Carlsen, $\approx2841$, and
to Awonder Liang, $\approx2696$) that aligns the best moves with the hardest positions.
At both strengths every aggregate statistic, average centipawn loss, top-move match
rate, and the Regan $z$-score \eqref{eq:reganz}, is preserved exactly, so the aggregate
detector operates at chance (AUC $0.500$), while the order-aware signature separates the
two ($D^2$ AUC $\approx0.75$). The conformance separation ($5.5$ to $5.6$ vs $10.9$) is
essentially identical across a $145$-Elo strength gap.}
\label{tab:regan}
\begin{tabular}{lccccc}
\toprule
& ACPL (cp) & match rate & Regan $z$ & L\'evy area $A^{c,q}$ & conformance $D^2$ \\
\midrule
\multicolumn{6}{l}{\emph{Carlsen operating point ($\approx2841$, ACPL $\approx5$):}}\\
\quad Honest               & $4.27$ & $0.472$ & $-1.05$ & $+0.03$ & $5.6$ \\
\quad Regan-invisible cheater & $4.27$ & $0.472$ & $-1.05$ & $+0.18$ & $10.9$ \\
\addlinespace
\multicolumn{6}{l}{\emph{Liang operating point ($\approx2696$, ACPL $\approx17$):}}\\
\quad Honest               & $17.0$ & $0.492$ & $-0.79$ & $+0.11$ & $5.5$ \\
\quad Regan-invisible cheater & $17.0$ & $0.492$ & $-0.79$ & $+0.71$ & $10.9$ \\
\midrule
\multicolumn{6}{l}{\emph{Single-game ROC-AUC} (both operating points): Regan
$|z|=0.500$; \ signature $D^2\approx0.75$.}\\
\bottomrule
\end{tabular}
\end{table}

\begin{figure}[t]
\centering
\includegraphics[width=\textwidth]{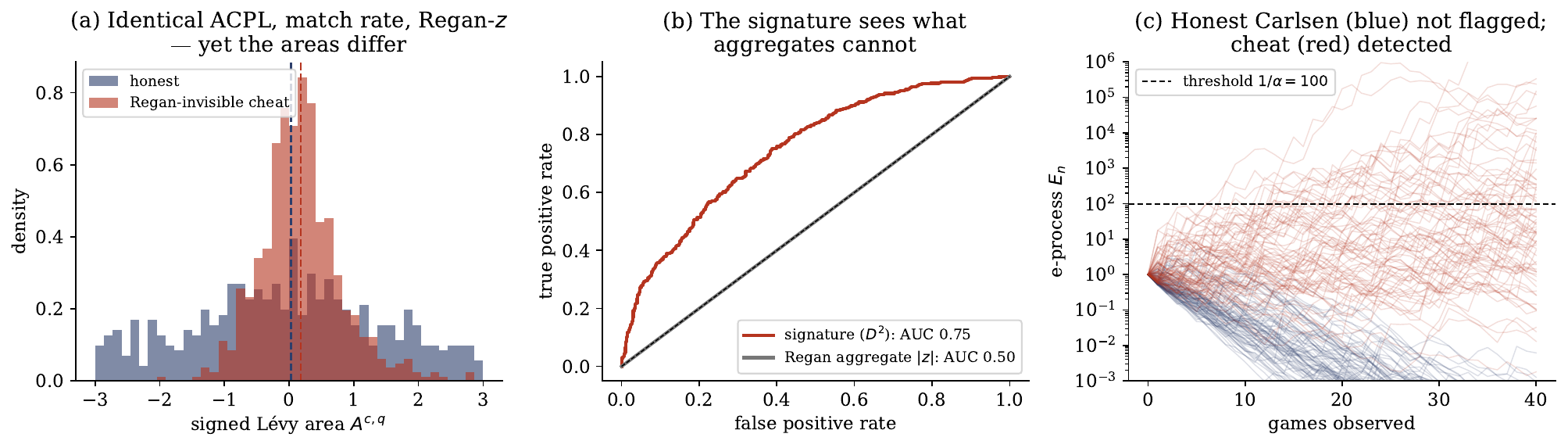}
\caption{Improving the Regan system. (a) The signed L\'evy area separates honest
Carlsen-calibrated games from their Regan-invisible rearrangements, even though the two
populations share every aggregate statistic (average centipawn loss, match rate, Regan
$z$-score) identically. (b) ROC curves: the order-aware signature conformance reaches
AUC $0.75$, while the aggregate Regan $z$-score sits exactly on the diagonal (AUC
$0.50$), blind by construction. (c) E-processes: honest Carlsen-calibrated play (blue)
is not flagged, while the rearranged cheater (red) is detected.}
\label{fig:regan}
\end{figure}

\paragraph{A second operating point: Awonder Liang.} Specificity should not depend on the
player being literally the strongest in the world. To check this we add a second honest
reference calibrated to a grandmaster about $145$ Elo below Carlsen: Awonder Liang, an
American grandmaster (FIDE $\approx2696$, U.S.\ Masters champion in $2025$ and World Open
winner in $2024$), whose documented play sits near an average centipawn loss of twenty,
with accuracy in the $87$ to $90\%$ grandmaster range \citep{liang_wiki,chesscom_acpl}. As
with Carlsen, this is a calibration to his rating-level accuracy, not an analysis of his
games, and Liang serves as an \emph{honest} reference. Our Liang-calibrated corpus has an
average centipawn loss of $17.0$, accuracy $88\%$, and a match rate of $0.49$
(Table~\ref{tab:regan}). The detector behaves exactly as at the top of the rating list:
the honest L\'evy area is near zero, the e-process flagged $0$ of $300$ honest matches,
and the Regan-invisible rearrangement, again sharing ACPL, match rate, and Regan $z$
identically ($17.0$, $0.492$, $-0.79$), is caught with conformance AUC $0.747$ against the
$z$-score's $0.500$. The conformance separation ($5.5$ vs $10.9$) is essentially identical
to Carlsen's ($5.6$ vs $10.9$). The detector keys on the temporal coupling, not the
accuracy level, so its specificity and its advantage over aggregates are
\emph{strength-invariant} across the elite range.

This is the sense in which the methodology improves on the prevailing approach. It does
not contradict Regan's skill model, indeed it can use that model as its null on path
space (Section~\ref{sec:regan}), but it adds a statistic that resolves \emph{when}
accuracy appears, closing an information-theoretic blind spot of any aggregate, and it
delivers that statistic through an anytime-valid e-process that needs no
multiple-testing correction as players and events accumulate.

\subsection{Aggregates and order together: a combined detector}
\label{sec:combined}

The rearrangement result should not be read as a claim that the signature dominates the
aggregate everywhere. The two see different things, and the honest conclusion is that
they are complementary. Consider two cheaters. A \emph{uniform} cheater leans on the
engine a little on every move, raising overall accuracy and match rate without
concentrating help on hard positions; an aggregate detector built on the match rate and
mean quality catches this easily, while the signature, which keys on coupling, barely
moves. A \emph{selective} cheater preserves the aggregates exactly and shifts the timing,
as in Section~\ref{sec:carlsen}; here the picture reverses. Table~\ref{tab:combined}
quantifies both, together with a \emph{combined} detector whose conformance score uses
the aggregate residual and the L\'evy areas jointly.

\begin{table}[t]
\centering\small
\caption{Two complementary cheaters and three detectors (single game ROC AUC). The
aggregate detector (match rate and mean quality, in the spirit of the Regan $z$-score)
catches uniform help but is blind to the selective rearrangement, whose aggregates are
preserved exactly. The signature is the mirror image. The combined detector, which uses
both, catches both and is the right deployment choice.}
\label{tab:combined}
\begin{tabular}{lccc}
\toprule
Cheater & Aggregate detector & Signature detector & Combined detector \\
\midrule
Uniform help (every move)        & $1.000$ & $0.592$ & $0.987$ \\
Selective (rearrangement)        & $0.489$ & $0.949$ & $0.938$ \\
\bottomrule
\end{tabular}
\end{table}

The reading is clean. Neither single statistic dominates: the aggregate is blind to
selective timing, the signature is blind to a uniform level shift, and each fails on
exactly the cheater the other catches. The combined detector inherits the strengths of
both, scoring AUC $0.99$ and $0.94$ against the two cheaters. In deployment this is the
recommended object, and it is also the precise form of our proposed improvement on the
Regan system: keep his calibrated per-move model as the aggregate component, and add the
signature L\'evy areas as the order-aware component, inside a single anytime-valid
e-process.

\subsection{The false-positive problem: the Nakamura case}
\label{sec:nakamura}

A detector is judged as much by whom it clears as by whom it catches. The most
instructive recent episode is the November 2023 accusation against Hikaru Nakamura, a
world top-five grandmaster (classical rating near $2790$) known for an aggressive,
high-variance style. After Nakamura scored $45.5/46$ in an online blitz event, Vladimir
Kramnik publicly insinuated that the streak was too improbable to be honest. The
inference was the \emph{prosecutor's fallacy}: a rare outcome was read as evidence of
guilt. Chess.com's Fair Play team generated nearly two thousand reports on his games and
found no cheating \citep{chesscom_fairplay}, and a Bayesian analysis by Maharaj, Polson
and Sokolov estimated the probability that Nakamura did \emph{not} cheat at about
$99.6\%$, robust across plausible base rates \citep{maharaj2024}. The episode is the
canonical false positive: an \emph{aggregate} anomaly, a streak, a performance rating
above expectation, used to impugn a strong honest player.

This is exactly the failure mode an order-aware monitor should avoid, and it does.
Calibrating the honest reference to Nakamura-level play, average centipawn loss $5.6$,
accuracy $92\%$, but with an elevated dispersion (standard-deviation centipawn loss
$6.0$) encoding the sharp style, the honest L\'evy area is $0.08$, indistinguishable
from zero. High accuracy and high variance do not by themselves create any
difficulty-accuracy coupling; only selectively-timed assistance does. Over $400$ runs of
$46$ games each, the very length of the contested streak, the signature e-process
flagged the Nakamura-calibrated player $0$ times ($95\%$ upper bound $0.0075$). The
order-aware test does not reproduce the false positive that the streak statistic
produced.

Two points follow. First, specificity is invariant not only to rating
(Section~\ref{sec:carlsen}) but to \emph{style}: an aggressive, high-variance honest
player is no more likely to be flagged than a smooth positional one, because the
discriminating coordinate is a coupling, not a level or a variance. Second, the
e-process is the natural sequential successor to the one-shot Bayesian rebuttal of
\citet{maharaj2024}: where that analysis answered a single after-the-fact question about a
fixed streak, an anytime-valid signature monitor would protect such a player
\emph{continuously}, bounding the lifetime false-accusation probability at $\alpha$ no
matter how many games or events are scrutinised, by Theorem~\ref{thm:ville}.

\section{Related work}

Our work sits at the confluence of three literatures. \emph{Rough paths and
signatures} originate with \citet{chen1957} and \citet{lyons1998}; the uniqueness
theorem is due to \citet{hambly2010}, the machine-learning programme is surveyed by
\citet{chevyrev2016primer} and \citet{lyons2024survey}, the probabilistic
characterisation of laws and the signature MMD are due to \citet{chevyrev2022moments}
building on the kernel of \citet{kiraly2019} and \citet{salvi2021}, and anomaly
detection on streamed data via signature conformance scores is developed by
\citet{cochrane2021anomaly}. The sports application closest to ours is the soccer
possession analysis of \citet{soccer2025}; we are, to our knowledge, the first to bring
signatures to chess. \emph{Anytime-valid inference} via e-values and test martingales
is the subject of \citet{vovk2021}, \citet{shafer2021}, \citet{grunwald2024}, and the
review of \citet{ramdas2023}; the moderate-deviation calibration we use is the
path-valued instance of \citet{polson2026evalues}, itself resting on the classical
moderate deviation principle \citep{dembo1998}. \emph{Quantitative chess forensics} has
been dominated by strength-model $z$-scores against intrinsic ratings
\citep{regan2011}; our contribution is orthogonal and complementary, replacing a single
aggregate statistic with a sequential, order-aware monitor whose signal is a signature
L\'evy area and whose validity is uniform over time.

We have shown that the path signature is a natural language for chess streams, turning
four inferential goals, identifying a style, comparing two bodies of play, detecting
anomalous play, and improving the Regan detector, into an injectivity statement about
expected signatures, a signature-MMD two-sample test, an anytime-valid e-process, and an
order-blindness proposition. The connective tissue is the L\'evy area, which exposes the
difficulty-accuracy coupling that aggregates cannot see, and the e-process, which makes
continuous monitoring legitimate.

Several limitations and extensions deserve emphasis. \emph{First}, our empirical study
is synthetic by design; the natural next step is a study on public game databases with
a fixed engine, using injected, independently labelled assisted games to estimate real
operating characteristics. \emph{Second}, the conditional-independence assumption
behind the product e-process \eqref{eq:eproc} is an idealisation; serial dependence
across a match (fatigue, momentum) calls for a test-martingale construction that
conditions on the past, which the e-value framework accommodates without sacrificing
Theorem~\ref{thm:ville}. \emph{Third}, the choice of channels and of their
representation as states or flows is consequential, as Section~\ref{sec:path} stresses;
a systematic ablation over richer channel sets (clock paths, opening-tree depth,
material trajectories) and over signature depth is warranted. \emph{Fourth}, the
moderate-deviation calibration of Section~\ref{sec:evalue} is developed here for
i.i.d.\ increments; extending the moderate deviation principle to the dependent,
non-stationary increments of real matches, in the spirit of \citet{polson2026evalues},
is the main theoretical loose end.

Beyond chess, the construction is generic: any setting that produces labelled streams
with a notion of difficulty and a notion of performance (examinations, trading desks,
e-sports, automated essay scoring) admits the same L\'evy-area diagnostic and the same
anytime-valid monitor. The substantive message is methodological. When the data are a
stream, the order and interaction of events are the signal, the signature is the
feature map that keeps them, and e-processes are the inferential layer that lets us act
on the evidence whenever it arrives.

\paragraph{Ethical deployment.} A move-quality monitor is an instrument of
accusation, and its statistical validity does not by itself make it fair to deploy. We
stress three safeguards implied by our own results. The detector is, by
Table~\ref{tab:results}, nearly powerless against subtle assistance, so a non-crossing
e-process is emphatically \emph{not} evidence of innocence; reporting it as such would
be a serious misuse. The anytime-valid guarantee controls the false-flag rate for a
\emph{single} honest law, but the reference corpus must actually reflect the player's
honest play, a corpus mis-specified by improvement over time, opening changes, or
time-control differences will inflate conformance scores for innocent reasons, and the
$N^{-1/2}$ calibration of Proposition~\ref{prop:concentration} presumes enough
representative honest games. Finally, the threshold $1/\alpha$ encodes a tolerated rate
of false accusation that, when many players are monitored, should be set with the
e-value combination rules of Section~\ref{sec:evalue} and with the consequences of a
wrong flag, reputational and otherwise, explicitly in view. The method is best
understood as one quantitative input to a human adjudication process, not as a verdict.

\paragraph{Reproducibility.} All figures and every numerical value in
Table~\ref{tab:results} are produced by a single simulation script; complexity, quality
and match processes, the L\'evy-area features, the conformance score, the calibrator
\eqref{eq:cal}, and the e-process are implemented as described, with random seed fixed.

\paragraph{Acknowledgements.} We thank our collaborators for discussions on signatures,
e-values, and chess.

\end{document}